\definecolor{gray10}{gray}{0.9}
\definecolor{gray20}{gray}{0.8}
\definecolor{gray30}{gray}{0.7}
\definecolor{gray40}{gray}{0.6}
\definecolor{gray50}{gray}{0.6}
\definecolor{gray60}{gray}{0.4}
\definecolor{gray70}{gray}{0.3}
\definecolor{gray80}{gray}{0.2}
\newtheorem{lemma}{Lemma}
\newtheorem{theorem}{Theorem}
\newtheorem{example}{Example}
\newtheorem{corollary}{Corollary}
\def\R{\mathbb R}
\def\C{\mathbb C}
\def\Z{\mathbb Z}
\def\a{a}
\newcommand{\suml}{\sum\limits}
\newcommand{\intl}{\int\limits}
\newcommand{\e}{{\rm e}}
\DeclareMathOperator{\Arg}{Arg}
	\title{Linear nonlocal problem for the abstract time-dependent non-homogeneous Schr\"odinger equation}
\author{Dmytro Sytnyk \\
		\small Institute of mathematics, National Academy of Sciences, Ukraine,\\
		\small sytnikd@gmail.com
		\and 
		Roderick Melnik \\
		\small Wilfrid Laurier University, Waterloo, Canada,\\
		\small rmelnik@wlu.ca}
\begin{document}
\maketitle
\begin{abstract}

A nonlocal-in-time problem for the abstract Schr\"odinger equation is considered. 
By exploiting the linear nature of nonlocal condition we derive an exact representation of the solution operator 
under assumptions that the spectrum of Hamiltonian is contained in the horizontal strip of complex plane.
The derived representation permits us to establish the necessary and sufficient conditions for the problem's well-posedness and the existence of its mild, strong solutions.  
Furthermore, we present new sufficient conditions for the existence of solution which extend the available results to the case when some nonlocal parameters are unbounded.
Two examples are provided.

\textbf{Keywords:} nonlocal problem, abstract time-dependent Schr\"odinger equation, well-posedness, solution operator, Dunford-Cauchy formula, zeros of polynomial, quantum computing, driven quantum systems.
\end{abstract}

\section{Introduction}
In the abstract setting the evolution of quantum system is governed by differential equation 
\begin{equation}\label{SchrodEqt}
	{i}\psi'_t-H\psi= i v(t), \quad t \in [0, T], 
\end{equation}
which is called time-dependent non-homogeneous Schr\"odinger equation.
Standard axiomatic approach to the quantum mechanics ensures that the state of the system described by a wave function  $\psi(t) \in X$,  is uniquely determined by \eqref{SchrodEqt} and a given initial state $\psi_0$
\begin{equation}\label{eq:SEIC}
\psi(0) = \psi_0.
\end{equation}
This is achieved by requiring that the linear operator (Hamiltonian) $H:\ X \rightarrow X$ is self-adjoint in the Hilbert space $X$ and its domain $D\left (H\right ) \subseteq X$ is dense. 
Stone's theorem states that in such a case there exists a strongly continuous unitary group $U(t) = \e^{-iHt}$  with generator $iH$ \cite{Fattorini1985}. 
The function $\psi(t)$ is called a mild solution of \eqref{SchrodEqt}, \eqref{eq:SEIC}, if it satisfies the equation
\begin{equation}\label{eq:MildSolSchrodCP}
\psi(t)  = U(t) \psi(0) + \intl_{0}^{t}U\left(t-s\right) v(s)ds, \quad t \in [0, T]. 
\end{equation}
Substitution of the initial data from  \eqref{eq:SEIC} into this general solution representation  lead us to the usual propagator formula $\psi(t)  = U(t) \psi_0$ for  the solution of 
\eqref{SchrodEqt}, \eqref{eq:SEIC} with $v(t) \equiv 0$.

In this work we consider a nonlocal generalization of condition \eqref{eq:SEIC}:
\begin{equation}\label{eq:linear_nc}
	\psi(0)+\suml_{k=1}^{n}\alpha_k \psi(t_k) =\psi_1.
\end{equation}
For the fixed  state $\psi_1 \in X$ this condition is determined by the set of parameters $ 0<t_1<t_2<\ldots < t_n \leq T,\ \alpha_k \in \C$ which will be called the \textit{parameters of nonlocal condition}.
Aside from the standard initial condition \eqref{eq:SEIC} it generalizes other important condition types, such as periodic conditions $\psi(0) = \psi(t_1)$ and Bitsadze-Samarskii conditions $\psi(0)+\alpha_1 \psi(t_1)=\alpha_2 \psi(t_2)$ (e.q. \cite{Ashyralyev2008}).
 Formula \eqref{eq:linear_nc} can be also viewed as approximation to a more general nonlinear condition $\psi(0) + g\left (t_1, \ldots t_k, \psi\left ( \cdot \right )\right ) = 0$ for a suitably defined function $g\left (t_1, \ldots t_k, \cdot \right ):\ X \rightarrow X$.
Among other applications, nonlocal problem \eqref{SchrodEqt}, \eqref{eq:linear_nc} is important for the theory of driven quantum systems, where one is interested in a way to recapture specific nonlocal behavior of solution $\psi(0) =\alpha_1 \psi(t_1)+\psi_1$ by changing the properties of driving potential $p(t)$ from the Hamiltonian $H = H_0 + p(t)$. 
To stay within the classical formulation \eqref{SchrodEqt},\eqref{eq:SEIC} this theory operates upon assumption that $p(t)$ is periodic 
\cite{KuwaharaMoriSaito2016,LazaridesDasMoessner2015,VerdenyMintert2015}.
Then a predictable nonlocal-in-time behavior of the system follows from the Floquet theorem \cite{daners1992abstract}.
The case of non-periodic $p(t)$ is much harder to treat, since the Floquet theory can not be applied.
It is our belief that the nonlocal formulation is a viable alternative to other proposed  generalizations of periodic quantum driving that are currently under research 
\cite{VerdenyPuigMintert2016,YanLiYangEtAl2015}.  
The above mentioned two-point nonlocal condition with $\psi_1\neq0$ can also be thought of as a generalization of the renowned Rabi problem \cite{Gardas2013,LeHur2016} used in the modern quantum computing for state preparation and information processing \cite{LeBellac2006}. 
A nonlinear problem similar to \eqref{SchrodEqt},\eqref{eq:linear_nc} was analyzed in \cite{bunoiu2016vectorial}, where the authors were motivated by the study of Bose-Einstein condensates. 

In spite of the increasing importance, 
a surprisingly little is known about the solution of \eqref{SchrodEqt}, \eqref{eq:linear_nc}. 
This problem was studied in \cite{ashyralyev2008nonlocal}, using the Hilbert space methods. 
For the self-adjoint $H$ it was proved that the condition
\begin{equation*}\label{eq:NCNS_ash_sufff_cond}
	\suml_{k=1}^{n}\left |\alpha_k \right | <1
\end{equation*}
is sufficient for the existence of solution to \eqref{SchrodEqt}, \eqref{eq:linear_nc}, when $\psi_1$ is a smooth enough vector with respect to $H$ (see \cite{ashyralyev2008nonlocal} for the details). 
The same condition appeared earlier in \cite{Byszewski1992}, where a more general nonlocal problem for the first order equation with sectorial operator coefficient in a Banach space was considered. 
In the current work we focus on a following generalization of the  condition from \cite{ashyralyev2008nonlocal,Byszewski1992}: 
\begin{equation}\label{estLiang2002} 
\suml_{k=1}^{n} |\alpha_k|e^{d t_k} \leq 1,
\end{equation}
 developed in \cite{NonlocalAbsNonLinNtouyas1997}. 
 Here $d$ is a half-height of the strip containing the spectrum of operator $H$ defined in the Banach space $X$.
In the course of the work we show that inequality \eqref{estLiang2002} represents only a fraction of the parameter space where problem \eqref{SchrodEqt}, \eqref{eq:linear_nc} is well-posed and have a mild solution defined by \eqref{eq:MildSolSchrodCP}.
 More generally, we establish new necessary and sufficient conditions for the existence of solution to \eqref{SchrodEqt}, \eqref{eq:linear_nc} which can be verified for any given set of $\alpha_k,t_k$ from \eqref{eq:linear_nc}.
 In addition to that, we derive several versions of sufficient conditions for the solvability of the given nonlocal problem which extend the region of admissible $\alpha_k$ outside the manifold governed by \eqref{estLiang2002}.

The paper is organized as follows. 
In Section \ref{sec:hs_operator_calculus} we introduce a notion of strip-type  operators $H$ acting on Banach space $X$ and review the functional calculus of such operators.
Our aim is to specify the class of $H$ such that the propagator $U(t)$ is well-defined and can be represented via the Dunford-Cauchy formula.
Section \ref{sec:NCNS_reduction} is devoted to the analysis of solution existence. 
We start with the reduction of nonlocal problem \eqref{SchrodEqt}, \eqref{eq:linear_nc} to classical Cauchy problem \eqref{SchrodEqt}, \eqref{eq:SEIC}. 
Then the operator calculus of Section \ref{sec:hs_operator_calculus} is applied to study the obtained solution operator of nonlocal problem.
Theorem \ref{thm:NCNS_exist_mild} gives the necessary and sufficient conditions for the existence and uniqueness of mild solution to \eqref{SchrodEqt}, \eqref{eq:linear_nc}. 
Corollaries \ref{thm:NCNS_exist_strong} and \ref{thm:NCNS_well_posed} concern the existence of strong solution and the well-posedness of the given problem. 
Apart from several simple cases, 
the conditions on parameters $\alpha_k,t_k$, mentioned in Theorem \ref{thm:NCNS_exist_mild},  can be verified if the values of these nonlocal parameters are specified.
In Section \ref{sec:NCNS_polynomial_reduction} we further adopt the technique from \cite{nonloc_exMVS2014}, which, when suited with the properly chosen conformal mapping (adjusted to the spectral-strip parameter $d$), permits us to reduce the question of solution's existence to the question about the location of roots for a certain polynomial associated with the nonlocal condition.
This, in turn, enables us to obtain the conditions for the existence and uniqueness of the solution to  \eqref{SchrodEqt}, \eqref{eq:linear_nc} stated in terms of the constraints on $t_k,\ \alpha_k$  (Theorems \ref{thm:NCNS_suf_cond_root_est}, \ref{thm:NCNS_criteria}).
Finally, we compare newly derived conditions against \eqref{estLiang2002}, using the three-point nonlocal problem as a model example.

\section{Functional calculus of strip-type operators}\label{sec:hs_operator_calculus}
With intent to study problem \eqref{SchrodEqt}, \eqref{eq:linear_nc} in a Banach space setting, 
in this section, we review necessary facts from the 
holomorphic functional calculus for operators with the spectrum in a horizontal strip \cite{Haase2007}.
A densely defined closed linear operator $H$ with the domain $D(H) \subseteq X$, whose spectrum belongs to the set 
\begin{equation}\label{eq:SpHalfStrip}
\Sigma_d = \left\{
z=x + i y \middle|\ x,y \in \R,\ |y|\leq d 
\right\},
\end{equation} 
and the resolvent $R\left(z,H\right) \equiv (zI-H)^{-1}$ satisfies
\begin{equation}\label{eq:ResHalfStrip}
\left \|R\left(z,H\right)\right \|\leq \frac{M}{|\Im{z}|-d},\quad z \in \Omega\setminus \Sigma,\  \Sigma \subset \Omega,
\end{equation}
is called a strip-type operator of the height $2d >0$. 

Next we define the rule to interpret operator functions. 
Let $f(z)$ be a complex valued function  analytic in the neighborhood $\Omega$ of the spectrum $\Sigma(H) \subset \C$ and $|f(z)|<c_f\left (1+|z|\right )^{-1-\delta}$, for $\delta >0$. Suppose that there exists a closed set $\Phi \subset \Omega$ with the boundary $\Gamma$ consisting of a finite number of rectifying Jordan curves,  then the operator function $f(H)$ can be defined as follows 
\begin{equation}
\label{reprDunford}
f(H) x=\frac{1}{2\pi i} \intl_{\Gamma}f(z) R(z,H)x dz.
\end{equation}
This formula yields an algebra homomorphism between the mentioned class of holomorphic functions 
and the algebra of bounded operators on $X$. Besides, any two valid functions of the same operator commute.
Unfortunately, Dunford-Cauchy integral \eqref{reprDunford}  can not be used straight away to define the propagator, 
because $|\e^{-iz}|$ will not vanish as $z\rightarrow\infty$ on $\Gamma$. Assume that there exists a so-called \textit{regularizer} function $\epsilon(z)$ such, that both $e(H)$ and $ef (H)$ are well defined in terms of \eqref{reprDunford} and $e(H)$ is injective. Then the formula 
\begin{equation}\label{eq:func_calc_regularizer}
f(A) = e^{-1}(H) ef (H)
\end{equation}
is used to define $f(H)$ for a class of functions wider than the natural function calculus defined by \eqref{reprDunford} alone. 
By setting $e(z)=(\lambda - z)^{-1-\delta}$ with $|\Im{\lambda}|>d$ we ensure that $f(H)x$ is well defined and bounded, whenever $f(z)$ is bounded in $\Omega$ and $e^{-1}(H)x$ exists. 
In other words, the propagator $U(t)$ is bounded linear operator with the domain $x \in D(H^{1+\delta})$. 
By using the closed graph theorem \cite{Munkres2005}, $U(t)$, $t \in \R$ can be extended to the bounded operator on $X$ when the set $D(H^{1+\delta})$ is dense in $X$. 
For more details on the construction and properties of the  functional calculus for strip-type operator we direct the reader to \cite[Chapter 4]{haase2006functional}. 

\section{Reduction of nonlocal problem to classical Cauchy problem}\label{sec:NCNS_reduction}
We depart from the general solution formula \eqref{eq:MildSolSchrodCP}, with $\psi(0)$ supplied by \eqref{eq:linear_nc} 
\begin{equation}\label{eq:linear_NCNS}
\psi(t)=U(t)\left(\psi_1-\suml_{k=1}^{n}\alpha_k \psi(t_k) \right)+\intl_{0}^{t}U\left(t-s\right) v(s)ds,
\end{equation}
that is valid for the strip-type operator $H$ under assumptions of Section \ref{sec:hs_operator_calculus}. 
To get the exact representation for $\psi(t)$ one needs to factor out the unknown $\psi(t_k)$, $k=\overline{1,n}$ from the above formula. 
We define 
$w \equiv \suml_{k=1}^{n}\alpha_k \psi(t_k)$ and  then formally evaluate this expression using \eqref{eq:linear_NCNS} as a representation for $\psi(t)$.
It leads to the equation 
$$
w = {-\suml_{i=1}^n \alpha_i U(t_i) w} +\suml_{i=1}^n \alpha_i U(t_i) \psi_1+ \suml_{i=1}^n \alpha_i \intl_0^{t_i} U(t_i-s)v(s) ds.
$$
By denoting 
$B = I +\suml_{i=1}^n \alpha_i U(t_i)$ we rewrite this equation as follows
\begin{equation}\label{eqnoneq1}
B w = B \psi_1 -\psi_1 +\suml_{i=1}^n \alpha_i \intl_0^{t_i} U(t_i-s)v(s) ds.
\end{equation}

At this point it is clear that equation \eqref{eqnoneq1} can be solved for $w$ with any combination of $\psi_1$ and $v(t)$ if and only if the operator function  $B$ posses the inverse $B^{-1}$. 
In such case the substitution  
\begin{equation}\label{eq:w_repr}
w = \psi_1- B^{-1} \psi_1 + B^{-1}\suml_{i=1}^n \alpha_i \intl_0^{t_i} U(t_i-s)v(s) ds
\end{equation}
into \eqref{eq:linear_NCNS} yields a representation of the general (mild) solution to nonlocal problem \eqref{SchrodEqt}, \eqref{eq:linear_nc}
 \begin{equation}
 \label{bp1IntRed}
 \begin{split}
 \psi(t)=& U(t)\left(B^{-1}\psi_1 -B^{-1}\suml_{i=1}^n \alpha_i \intl_0^{t_i} U(t_i-s)v(s) ds\right)+\intl_{0}^{t}U\left(t-s\right) v(s)ds.\\
 \end{split}
 \end{equation}
Now we can formalize our previous analysis as a theorem.

\begin{theorem}\label{thm:NCNS_exist_mild}
	Let  $H$ be a strip-type operator with the spectrum $\Sigma$, having nonempty point-spectrum component, and the domain $D(H^\delta)$ is dense in $X$ for some $\delta > 1$. 
	The mild solution of nonlocal problem \eqref{SchrodEqt}, \eqref{eq:linear_nc} exists and is unique for any 
	$\psi_1 \in X$, $v \in L^1((0;T),X)$  
	if and only if all the zeros of the entire function 
	\begin{equation}
	\label{zerosExpI}
	b(z)=1+\sum_{k=1}^n{\alpha_k e^{(-i t_k z)}}, 
	\end{equation}
	associated with \eqref{eq:linear_nc}, are contained in the interior of the set $\mathbb{C} \backslash \Sigma$.
\end{theorem}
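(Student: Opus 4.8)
The plan is to reduce the existence-and-uniqueness assertion to the bounded invertibility of the operator $B = I + \sum_{k=1}^n \alpha_k U(t_k)$, which the computation preceding the theorem already shows to be equivalent to solvability for every $\psi_1 \in X$ and every $v \in L^1((0;T),X)$. First I would observe that, since $U(t_k) = e^{-i t_k H}$ is the operator attached to the symbol $e^{-i t_k z}$ and the functional calculus of Section \ref{sec:hs_operator_calculus} is a (regularized) algebra homomorphism, the operator $B$ coincides with $b(H)$, where $b$ is the entire symbol \eqref{zerosExpI}. Each $e^{-i t_k z}$ is bounded on the closed strip $\Sigma_d$, since $|e^{-i t_k z}| = e^{t_k \Im z} \le e^{t_k d}$ there; hence $b$ is a bounded holomorphic function on a neighbourhood of $\Sigma$ and $b(H)$ is a genuine bounded operator.

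The core of the argument is a spectral-mapping step. I would invoke the spectral mapping property of the strip-type functional calculus to identify $\sigma(b(H))$ with $\overline{b(\Sigma)}$; consequently $B = b(H)$ is boundedly invertible exactly when $0 \notin \overline{b(\Sigma)}$, that is, when $\inf_{z \in \Sigma} |b(z)| > 0$. It then remains to show that, for the particular exponential sum \eqref{zerosExpI}, this strict lower bound is equivalent to the purely geometric condition that $b$ have no zero in $\Sigma$, equivalently that all zeros lie in the open set $\mathbb{C} \setminus \Sigma$.

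For the necessity half I would use the standing hypothesis that $H$ has nonempty point spectrum: if $b$ vanished at an eigenvalue $\lambda \in \Sigma$, the corresponding eigenvector $\psi_\lambda$ would satisfy $B \psi_\lambda = b(\lambda) \psi_\lambda = 0$, contradicting the injectivity of $B$, while the spectral-mapping identification upgrades this into the statement that any zero of $b$ lying in $\Sigma$ obstructs invertibility. For sufficiency, if $b$ is zero-free on $\Sigma$ then $1/b$ is holomorphic there, and I would show it is in fact bounded on $\Sigma$, so that $(1/b)(H)$ is defined through the same calculus and furnishes the two-sided inverse of $b(H)$.

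The main obstacle is precisely this last boundedness claim, i.e. the passage from ``$b \ne 0$ on $\Sigma$'' to ``$\inf_\Sigma |b| > 0$'': because $\Sigma$ is in general unbounded in the real direction and $x \mapsto b(x + i y)$ is only almost periodic, compactness is unavailable and a non-vanishing almost periodic function need not be bounded below. Here I would exploit the structure of \eqref{zerosExpI}. Since $0 < t_1 < \dots < t_n$, the highest-frequency term dominates as $\Im z \to +\infty$, forcing $|b(z)| \to \infty$, whereas every oscillating term decays as $\Im z \to -\infty$, forcing $b(z) \to 1$. The zeros of $b$ are therefore confined to a bounded horizontal band, and on any closed horizontal strip disjoint from that band $b$ is automatically bounded away from zero; combining this with the almost-periodicity along horizontal lines yields the required uniform lower bound on $\Sigma$ and closes the equivalence.
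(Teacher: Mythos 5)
Your skeleton coincides with the paper's: both arguments reduce the theorem to the bounded invertibility of $B=b(H)$, and your necessity step (an eigenvector $\psi_\lambda$ with $B\psi_\lambda=b(\lambda)\psi_\lambda=0$ when a zero of $b$ meets the point spectrum) is exactly the paper's, which likewise uses the nonempty point spectrum and a sequence $\psi_{1k}\to\varphi$ to defeat the boundedness of $B^{-1}$. Where you genuinely diverge is sufficiency: the paper never proves a spectral mapping theorem or a uniform lower bound for $|b|$; it simply defines $B^{-1}=(1/b)(H)$ by the Dunford--Cauchy integral \eqref{reprDelta} over a contour placed in a zero-free neighbourhood of $\Sigma$. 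Your route through $\sigma(b(H))=\overline{b(\Sigma)}$ is more ambitious, and a first soft spot is already there: for the regularized holomorphic calculus of strip-type operators on a Banach space only the spectral inclusion $b(\sigma(H))\subseteq\sigma(b(H))$ is available off the shelf; equality requires controlling the cluster values of $b$ as $\Re z\to\pm\infty$, and your symbol, being almost periodic, has a large cluster set there, so the identity you invoke is not a quotable theorem but itself a substantial claim to be proved.

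The decisive gap, however, is your final step. From the correct observation that the zeros of $b$ lie in a bounded horizontal band (since $b\to1$ as $\Im z\to-\infty$ and $|b|\to\infty$ as $\Im z\to+\infty$) you conclude that $\Sigma$ sits in a closed strip disjoint from that band. Nothing in the hypothesis gives this: the theorem only requires each individual zero to avoid $\Sigma$, and in general $b$ has zeros both above and below the spectral strip --- already in Example \ref{ex1} the choice $|\alpha_1|>e^{t_1 d}$ puts the zeros below the strip while $|\alpha_1|<e^{-t_1 d}$ puts them above, and for $n\geq 2$ mixed configurations with groups of zeros on opposite sides of $\Sigma$ occur. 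In that situation the minimal band containing all zeros contains $\Sigma$, your disjointness premise is vacuous, and the argument yields no lower bound at all. Worse, when the $t_k$ have irrational ratios the imaginary parts of the zeros are typically dense in closed intervals and can accumulate at the levels $\pm d$ from outside while every zero stays in the open set $\{\,|\Im z|>d\,\}$; then $\inf_{\Sigma}|b|=0$ although the stated hypothesis holds, so the equivalence between ``$b$ zero-free on $\Sigma$'' and ``$\inf_\Sigma|b|>0$'' that your proof needs fails without a positive-distance assumption on the zeros. The natural repair is a Bochner normal-family argument: if $b(z_j)\to 0$ with $z_j\in\Sigma$, real translates of $b$ converge to a function in the hull with a zero on $\Sigma$, and Hurwitz's theorem then forces zeros of $b$ itself into every neighbourhood of $\Sigma$; this closes the argument when the zeros keep a positive distance from $\Sigma$ (automatic in the rational, i.e.\ periodic, case exploited in Section \ref{sec:NCNS_polynomial_reduction}, where the zeros occupy finitely many horizontal lines), but it shows that your band-disjointness shortcut cannot be salvaged as stated.
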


\begin{proof}
We prove necessity first. 
A solution to the given nonlocal problem satisfies differential equation \eqref{SchrodEqt}, hence general representation \eqref{eq:MildSolSchrodCP} is valid for such solution with any given combination of $\psi (0)$, $v(t)$.
Upon setting  $v(t) = 0$ in this representation, we substitute it into \eqref{eq:linear_nc} to get the equation
\begin{equation}\label{eq:nc_zero_v}
B\psi(0) = \psi_1
\end{equation}
with respect to $\psi(0)$.	 
Suppose that the function $b(z)$ has a root $z_0 \in \Sigma$ which belongs to the point spectrum of $H$, with $\varphi \neq 0$ being the corresponding eigenstate. 
Now, we pick a bounded sequence $\left\{ \psi_{1k}\right\}_{k=1}^\infty$, so that $\psi_{1k} \in D(H^\delta)$, $\psi_{1k} \neq \varphi$ and $\psi_{1k} \rightarrow \varphi$ strongly. 
Such sequence always exists since the domain $D(H^\delta)$ is dense in $X$.
By the theorem's premise, for any $\psi_1$ there should exist a corresponding bounded  state $\psi(0)$ satisfying \eqref{eq:nc_zero_v}.   
To show that this is not true for $\psi_1 = \varphi$ we, first, evaluate $B\varphi$ via the Dunford-Cauchy integral 
\[
B\varphi =\frac{1}{2\pi i}\intl_{\Gamma} b\left (z\right )R(z,H) \varphi dz = \frac{1}{2\pi i}\intl_{\Gamma} \frac{b\left (z\right )}{z-z_0} \varphi  dz = b(z_0) = 0, 
\]
and then apply the general inequality $\|B^{-1}\| \geq \frac{1}{\|B\|}$ to $B^{-1}\psi_{1k}$:
$$
\lim\limits_{k \rightarrow \infty}\|B^{-1}\psi_{1k}\| \geq \lim\limits_{k \rightarrow \infty}\frac{1}{\|B\psi_{1k}\|} = \infty .
$$

Next we prove sufficiency.  
Assume that all the zeros of $b(z)$ belong to the interior of $\mathbb{C} \backslash \Sigma$. 
By using the operator function calculus from Section \ref{sec:hs_operator_calculus} we define
\begin{equation}
\label{reprDelta}
B^{-1}\varphi =\frac{1}{2\pi i}\intl_{\Gamma} \frac{1}{b\left (z\right )} R(z,H) \varphi dz, 
\end{equation}
for any 
$\varphi \in X$.
The contour $\Gamma$ satisfying the requirements of \eqref{reprDunford} exists, since $1/b(z)$ is holomorphic in the neighborhood of $\Sigma$. 
Formula \eqref{reprDelta}, the condition $v \in L^1((0;T),X)$ and Lemma 5.2 from \cite{Fattorini1985} guaranty that the state $\psi_0$ given by 
\begin{equation}\label{eq:initial_state_NCNS}
\psi_0 = B^{-1}\psi_1 -B^{-1}\suml_{i=1}^n \alpha_i \intl_0^{t_i} U(t_i-s)v(s) ds,
\end{equation}
is well-defined for any combination of $v(t)$ and  $\psi_1$ fulfilling  the theorem's assumptions.
That, in turn,  implies a well-definiteness of $\psi(t)$ given by formula \eqref{bp1IntRed}. 
To prove that $\psi(t)$ is a solution to nonlocal problem \eqref{SchrodEqt},\eqref{eq:linear_nc} we need to check if it satisfies \eqref{eq:linear_NCNS}. 
This is trivially true, since  \eqref{bp1IntRed} is transformed into \eqref{eq:linear_NCNS} via the direct manipulation with initial state \eqref{eq:initial_state_NCNS} using \eqref{eq:w_repr}:
\[
\psi_0 = \psi_1 - \psi_1 + \psi_0 =  \psi_1 - w = \psi_1-\suml_{k=1}^{n}\alpha_k \psi(t_k).
\]
The uniqueness of solution \eqref{bp1IntRed} to the given nonlocal problem  follows from the linear nature of both differential equation \eqref{SchrodEqt} and nonlocal condition \eqref{eq:linear_nc} as well as from the fact that  $\psi(t) \equiv 0$ when $\psi_1$ and $v(t)$ are equal to zero simultaneously. 
\end{proof}
We highlight, that the proof of sufficiency relies only on the assumptions needed for the existence of operator function $B(H)\psi$ for any $\psi \in X$. 
These assumptions do not include the requirement of $H$ having at least one eigenvector, that is essential to prove the necessity of Theorem \ref{thm:NCNS_exist_mild}. 
The theorem concerns the existence and uniqueness of the solution for any possible combination of $\psi_1$ and $v(t)$.
It does not discount the existence of solutions other than \eqref{bp1IntRed} for some specific combination of $\psi_1$ and $v(t)$. 
Namely, if the nonzero initial data $\psi_1, v(t)$ is chosen in such a way that the right-hand side of 
\begin{equation}\label{eq:nc_nonzero_v}
B\psi(0) = \psi_1 - \suml_{i=1}^n \alpha_i \intl_0^{t_i} U(t_i-s)v(s) ds
\end{equation}
is zero and there is a non-empty intersection between the set of roots of $b(z)$ and the spectrum of $H$, then  one can construct a whole family of non-trivial solutions to \eqref{SchrodEqt}, \eqref{eq:linear_nc}.
Because, as we have shown in the proof, every eigenstates of $H$ for which the corresponding eigenvalue coincides with the root of $b(z)$, will satisfy \eqref{eq:nc_nonzero_v} with the zero right-hand side.

It should also be noted, that by its structure, formula \eqref{bp1IntRed} resembles representation \eqref{eq:MildSolSchrodCP} of the solution to classical Cauchy problem \eqref{SchrodEqt}, \eqref{eq:SEIC}. 
More precisely, the following is true.
\begin{corollary}\label{thm:NCNS_equiv_Cauchy}
Assume that the requirements of Theorem \ref{thm:NCNS_exist_mild} are fulfilled, then the mild solution of nonlocal problem \eqref{SchrodEqt}, \eqref{eq:linear_nc} is equivalent to the solution of  classical Cauchy problem \eqref{SchrodEqt}, \eqref{eq:SEIC} represented by \eqref{eq:MildSolSchrodCP}, with the initial state $\psi_0$ defined by \eqref{eq:initial_state_NCNS}.
\end{corollary}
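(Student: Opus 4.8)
The plan is to establish the equivalence by a direct comparison of the two solution representations, treating it as two mutually inverse passages between the nonlocal and the classical problem. Most of the work has already been carried out in Theorem~\ref{thm:NCNS_exist_mild}, so what remains is essentially bookkeeping with the functional calculus of Section~\ref{sec:hs_operator_calculus}.

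First I would observe that substituting the state $\psi_0$ from \eqref{eq:initial_state_NCNS} in place of $\psi(0)$ in the Cauchy mild-solution formula \eqref{eq:MildSolSchrodCP} reproduces the nonlocal representation \eqref{bp1IntRed} verbatim. Since the hypotheses of Theorem~\ref{thm:NCNS_exist_mild} guarantee that $b(z)$ has no zeros on $\Sigma$, the operator $B^{-1}$ is defined on all of $X$ through \eqref{reprDelta}, so $\psi_0 \in X$ is a legitimate initial datum. This settles one direction: every mild solution of the nonlocal problem coincides with the mild solution of the Cauchy problem \eqref{SchrodEqt}, \eqref{eq:SEIC} generated by the initial state $\psi_0$.

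For the converse I would verify that the Cauchy solution produced by $\psi(0)=\psi_0$ indeed satisfies the nonlocal condition \eqref{eq:linear_nc}. Setting $w\equiv\suml_{k=1}^{n}\alpha_k\psi(t_k)$ and evaluating it through \eqref{eq:MildSolSchrodCP} at the points $t_k$ leads back to equation \eqref{eqnoneq1}; solving it via the now-available $B^{-1}$ returns the expression \eqref{eq:w_repr}. Comparing this with \eqref{eq:initial_state_NCNS} yields the identity $\psi_0 = \psi_1 - w$, which is exactly \eqref{eq:linear_nc}. Uniqueness in both passages then follows, as in Theorem~\ref{thm:NCNS_exist_mild}, from the linearity of \eqref{SchrodEqt} and \eqref{eq:linear_nc} together with the triviality of the solution for zero data.

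The only genuine point requiring care, and the step I expect to be the main obstacle, is justifying that the bounded operator $B^{-1}$ may be interchanged with the propagator $U(t)$ and pulled inside the Bochner integrals appearing in \eqref{bp1IntRed} and \eqref{eq:initial_state_NCNS}. Here I would invoke Section~\ref{sec:hs_operator_calculus}: both $B^{-1}$ and $U(t)$ are functions of the same operator $H$ and therefore commute, while the boundedness of $B^{-1}$ on $X$ permits it to pass through the integration in $s$. Once these commutations are certified, the asserted equivalence reduces to a formal identity between the two representations.
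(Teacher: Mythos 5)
Your proof is correct and takes essentially the same route as the paper: the corollary carries no separate proof there precisely because it is immediate from the construction in Theorem \ref{thm:NCNS_exist_mild}, whose steps (equation \eqref{eqnoneq1}, the representation \eqref{eq:w_repr}, and the identity $\psi_0=\psi_1-w$) you reproduce in both directions. The interchange of $B^{-1}$ with $U(t)$ and the Bochner integrals that you flag as the main obstacle is already covered by the functional calculus of Section \ref{sec:hs_operator_calculus} (commutativity of functions of $H$ and boundedness of $B^{-1}$ via \eqref{reprDelta}), exactly as you invoke it.
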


The correspondence between the solution of nonlocal problem and the solution of the classical Cauchy problem permits us to establish other important properties of \eqref{SchrodEqt}, \eqref{eq:linear_nc}.
\begin{corollary}\label{thm:NCNS_exist_strong}
Assume that in addition to the requirements of Theorem \ref{thm:NCNS_exist_mild} on $H$, both $b(z)$, $\psi_1$ belong to $D(H)$ 
and either one of two following conditions is satisfied:
	\begin{description}
		\item{a)} $v(t) \in D(H)$ and $v(t)$, $H v(t)$ are continuous on $[0, T]$, or 
		\item{b)} $v(t)$ is continuously differentiable on $[0, T]$.
	\end{description}
Then \eqref{bp1IntRed} is a strong (genuine) solution of nonlocal problem \eqref{SchrodEqt}, \eqref{eq:linear_nc}.
\end{corollary}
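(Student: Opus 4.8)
The plan is to reduce the claim to the corresponding statement about the classical Cauchy problem \eqref{SchrodEqt}, \eqref{eq:SEIC} and then invoke the standard characterization of strong solutions for inhomogeneous abstract Cauchy problems. By Corollary \ref{thm:NCNS_equiv_Cauchy}, the mild solution \eqref{bp1IntRed} of the nonlocal problem coincides with the mild solution \eqref{eq:MildSolSchrodCP} of the Cauchy problem whose initial state is $\psi_0$ given by \eqref{eq:initial_state_NCNS}. Consequently it suffices to show that this reduced Cauchy problem admits a strong solution, for which two ingredients are classically required: first, that the initial datum satisfies $\psi_0 \in D(H)$, and second, that the forcing term $v(t)$ is regular enough --- precisely conditions a) or b) --- so that the convolution $\intl_0^t U(t-s)v(s)\,ds$ is continuously differentiable and takes values in $D(H)$. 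The second ingredient is exactly Lemma 5.2 (and the surrounding discussion) of \cite{Fattorini1985}, already used in the proof of Theorem \ref{thm:NCNS_exist_mild}, so conditions a) and b) transfer verbatim to our setting.

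The substantive step is therefore to verify that $\psi_0 \in D(H)$. I would split \eqref{eq:initial_state_NCNS} into its two summands and treat them separately. For the first summand $B^{-1}\psi_1$, I would use that $B^{-1}$ is a bounded function of $H$ defined by \eqref{reprDelta}; since any two functions of $H$ commute and $1/b(z)$ is holomorphic in a neighborhood of $\Sigma$, the operator $B^{-1}$ maps $D(H)$ into $D(H)$ and commutes with $H$ there, whence $\psi_1 \in D(H)$ forces $B^{-1}\psi_1 \in D(H)$. For the second summand, the hypotheses on $v(t)$ place each $\intl_0^{t_i} U(t_i-s)v(s)\,ds$ in $D(H)$ by the same lemma from \cite{Fattorini1985}; applying the bounded operator $B^{-1}$, which preserves $D(H)$ as just argued, keeps the whole term in $D(H)$. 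Adding the two contributions yields $\psi_0 \in D(H)$.

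With $\psi_0 \in D(H)$ established and the regularity of $v(t)$ in hand, the classical theory guarantees that \eqref{eq:MildSolSchrodCP} --- equivalently \eqref{bp1IntRed} --- is differentiable, stays in $D(H)$ for every $t \in [0,T]$, and satisfies \eqref{SchrodEqt} pointwise; since it already satisfies the nonlocal condition \eqref{eq:linear_nc} by the construction in Theorem \ref{thm:NCNS_exist_mild}, it is a strong solution of the nonlocal problem.

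I expect the main obstacle to be the precise justification that $B^{-1}$ leaves $D(H)$ invariant and commutes with $H$ on it. This is a statement about the regularized functional calculus of Section \ref{sec:hs_operator_calculus}: because $B$ is built from the bounded group elements $U(t_k)$, but its inverse is obtained through the Dunford--Cauchy integral \eqref{reprDelta} with the regularizer $e(z)=(\lambda-z)^{-1-\delta}$, one must check that the commutation $H B^{-1} \supseteq B^{-1} H$ survives the regularization, which is where the density of $D(H^\delta)$ and the boundedness of $1/b(z)$ on $\Sigma$ enter. I would also read the hypothesis phrased as ``$b(z) \in D(H)$'' as precisely the compatibility condition ensuring that $B^{-1}$ carries $D(H)$ into itself; once that is granted, the remaining manipulations are routine applications of results already cited in the excerpt.
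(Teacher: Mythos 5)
Your proposal follows essentially the same route as the paper's own proof: both reduce the claim to the classical result on genuine solutions of the Cauchy problem (Lemma 5.1 of \cite{Fattorini1985}) by verifying $\psi_0 \in D(H)$ from \eqref{eq:initial_state_NCNS}, commuting $H$ past $B^{-1}$ via the strip-type functional calculus and moving $H$ under the integral using condition a), with case b) deferred to the classical argument. The commutation subtlety you flag is precisely what the paper addresses (somewhat sketchily) by approximating $\psi_1$ from the dense set $D(H^{\delta})$ on which $B^{-1}H$ is bounded, so there is no substantive divergence or gap.
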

\begin{proof}
	We proceed by reducing the proof to the corresponding results on the genuine solution of the classical Cauchy problem \cite[Lemma 5.1]{Fattorini1985}. 
	In order to achieve that it is enough to show that the theorem's assumptions imply $\psi_0 \in D(H)$ or, which is equivalent, that $H\psi_0$ is well defined. 
	We depart from \eqref{eq:initial_state_NCNS} and use the above-mentioned properties of function calculus for strip-type operators:
	\[
		\begin{split}
			H\psi_0 = &HB^{-1}\left (\psi_1 - \suml_{i=1}^n \alpha_i \intl_0^{t_i} U(t_i-s)v(s) ds\right ) \\ 
		            = &B^{-1}H\psi_1 -B^{-1}\suml_{i=1}^n \alpha_i \intl_0^{t_i} U(t_i-s)Hv(s) ds.
		\end{split}
	\]
	The first term in the last formula is well defined because $\psi_1 \in D(H)$ and there always exists a sequence of states from $D(H^{\delta})$ with $\psi_1$ as a limit, such that $B^{-1}H$ is bounded 
	on the elements of that sequence. 
	By the same token we can show the well-definiteness of the second term, under assumption that \textit{a)} is true. 
	The case of \textit{b)}, as well as the rest of the proof, literally repeats the proof of the mentioned Lemma 5.1 from \cite{Fattorini1985}, and thus will be omitted here.  
%
\end{proof}
The conditions necessary for the existence of the strong solution are closely related to the well-posendess of \eqref{SchrodEqt}, \eqref{eq:linear_nc}. The evolution problem is called uniformly well-posed in $t \in [0, T]$ (see Section 1.2 of \cite{Fattorini1985}), if and only if the strong solution exists for a dense subspace of the initial data and the solution operator is uniformly bounded in $t$ on compact subsets of $[0, T]$. 

\begin{corollary}\label{thm:NCNS_well_posed}
	Let $H$ be an operator satisfying the assumptions of Theorem \ref{thm:NCNS_exist_mild}. 
	The nonlocal problem \eqref{SchrodEqt}, \eqref{eq:linear_nc} is uniformly well-posed in $t \in \R$ for any bounded $t_k\in [0, T]$, $\alpha_k \in \C$ if and only if all the zeros of $b(z)$ defined by \eqref{zerosExpI} are separated from $\Sigma$ .
\end{corollary}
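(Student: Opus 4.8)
The plan is to read well-posedness off its definition as the conjunction of two properties---existence of a strong solution on a dense set of data, and uniform boundedness of the solution operator on compact $t$-intervals---and to show that, given the earlier Corollaries, everything reduces to a single operator-theoretic dichotomy: whether $B^{-1}$ is bounded. The dense-solvability half is almost free. Separation of the zeros from $\Sigma$ is strictly stronger than the hypothesis of Theorem~\ref{thm:NCNS_exist_mild}, so a unique mild solution exists, and Corollary~\ref{thm:NCNS_exist_strong} upgrades it to a strong solution on the dense set of data $\psi_1\in D(H)$ admitted there; conversely, well-posedness forces strong, hence mild, solvability, so by the necessity half of Theorem~\ref{thm:NCNS_exist_mild} no zero of $b$ may lie on $\Sigma$. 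Thus in both directions only the quantitative gap between the zeros and $\Sigma$ is at issue.

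Next I would identify uniform boundedness with boundedness of $B^{-1}$. By Corollary~\ref{thm:NCNS_equiv_Cauchy} the solution operator acting on $\psi_1$ at $v\equiv0$ is $U(t)B^{-1}$, which at $t=0$ is $B^{-1}$ itself; since $U(t)=\e^{-iHt}$ is a group that is uniformly bounded on every compact $t$-set, the family $U(t)B^{-1}$ is uniformly bounded on compacts precisely when $B^{-1}$ is bounded. Writing $B=b(H)$ in the calculus of Section~\ref{sec:hs_operator_calculus} and invoking its spectral mapping property $\Sigma(B)=\overline{b(\Sigma)}$, boundedness of $B^{-1}$ is equivalent to $0\notin\overline{b(\Sigma)}$, i.e.\ to $\inf_{z\in\Sigma}|b(z)|>0$, the inverse being then represented by the regularised integral \eqref{reprDelta} along a contour on which $1/b$ is bounded. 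The task therefore becomes to match the numerical condition $\inf_\Sigma|b|>0$ with the geometric condition that the zeros of $b$ are separated from $\Sigma$.

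One implication is immediate and yields necessity. On the closed strip $\Sigma_d\supseteq\Sigma$ one has the uniform Lipschitz bound $|b'(z)|\le\sum_k t_k|\alpha_k|\e^{t_k d}=:L$, so if a zero $w$ sits at distance $\delta$ from its nearest point $z\in\Sigma$ then $\inf_\Sigma|b|\le|b(z)|=|b(z)-b(w)|\le L\delta$; hence a positive infimum pushes every zero off a fixed neighbourhood of $\Sigma$, while zeros creeping up to $\Sigma$ force $\inf_\Sigma|b|\to0$ and unbound $B^{-1}$. Combined with the reduction above this gives the necessity direction: well-posed $\Rightarrow B^{-1}$ bounded $\Rightarrow$ zeros separated.

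The reverse implication, separation $\Rightarrow\inf_\Sigma|b|>0$, is the main obstacle. For a general holomorphic function a zero-free neighbourhood of $\Sigma$ need not bound $|b|$ below, and the difficulty is genuine precisely because $\Sigma$ runs off to infinity along the real axis, where $b$ is only almost periodic and its modulus may dip arbitrarily low without ever vanishing. The mechanism I would exploit is recurrence: on each line $\Im z=c\in[-d,d]$ the restriction $b(\cdot+ic)$ is uniformly almost periodic in $\Re z$, so a sequence $z_j=x_j+iy_j\in\Sigma$ with $|b(z_j)|\to0$ yields, through a Bohr-translation and normal-families argument, a limit exponential sum $b_\infty(z)=1+\sum_k\alpha_k^\infty\e^{-it_kz}$ (with $|\alpha_k^\infty|=|\alpha_k|$) vanishing at a strip point $iy_\infty$. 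When $\Sigma$ is invariant under real translations---in particular in the self-adjoint case $\Sigma\subseteq\R$---the real shifts preserve the distances of zeros to $\Sigma$, so the strip zero of $b_\infty$ transfers back, via the uniform bound on $b'$, to a zero of $b$ arbitrarily close to $\Sigma$, contradicting separation; for a general spectral set one must additionally track how $\Sigma$ itself recurs under these translations, and this coupling between the almost-periodic structure of $b$ and the geometry of $\Sigma$---not any resolvent estimate or contour choice---is where the real work lies. Once $\inf_\Sigma|b|>0$ is secured, boundedness of $B^{-1}$ and hence uniform well-posedness follow from the reductions already in place.
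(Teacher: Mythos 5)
Your opening reduction is, almost verbatim, the paper's entire proof: the paper's argument consists of three sentences combining the dense set $D(H)$ of strongly solvable data already produced in Corollary \ref{thm:NCNS_exist_strong}, boundedness of $B^{-1}$ and of $U(t)$ (hence of the solution operator in \eqref{bp1IntRed}), and the group property of the propagator together with \cite[Theorem 2.1]{Fattorini1985} to upgrade boundedness to uniform boundedness in $t\in\R$. Where you go beyond the paper is in refusing to take the middle step for granted: the paper simply asserts that the assumptions ``imply the boundedness of $B^{-1}$,'' whereas you correctly observe that, since $\Sigma$ is unbounded and $b$ is only almost periodic along it, a zero-free neighbourhood of $\Sigma$ does not by itself bound $1/b$ on a contour for \eqref{reprDelta}; the quantitative statement $\inf_{\Sigma}|b|>0$ is genuinely needed, and your Lipschitz estimate relating it to the distance from the zero set is sound (for zeros within, say, unit distance of $\Sigma$, so that the connecting segment stays in a strip where $|b'|$ is uniformly bounded). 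One small caveat: the equality $\Sigma(B)=\overline{b(\Sigma)}$ is more than the calculus of Section \ref{sec:hs_operator_calculus} hands you --- in general only the spectral inclusion $b(\Sigma)\subseteq\Sigma(B)$ is free --- but inclusion is all your necessity direction actually uses, and sufficiency can bypass spectral mapping entirely via the integral \eqref{reprDelta} once a lower bound for $|b|$ near $\Sigma$ is in hand.

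The one step you flag as open --- separation of zeros $\Rightarrow\inf_{\Sigma}|b|>0$ for a general spectral set --- does not in fact require tracking how $\Sigma$ recurs under translations; your own normal-families construction closes it. Suppose $z_j=x_j+iy_j\in\Sigma$ with $b(z_j)\to0$. Pass to a subsequence along which $y_j\to y_\infty$ and $\e^{-it_kx_j}\to \e^{i\theta_k}$ for each $k$; then $b_j(z):=b(z+x_j)$ converges uniformly on the closed strip $|\Im z|\le d+1$ to $b_\infty(z)=1+\sum_{k=1}^{n}\alpha_k \e^{i\theta_k}\e^{-it_kz}$, which is not identically zero (it tends to $1$ as $\Im z\to-\infty$ because all $t_k>0$) and satisfies $b_\infty(iy_\infty)=\lim b_j(iy_j)=0$. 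Hurwitz's theorem then gives zeros $w_j$ of $b_j$ with $w_j\to iy_\infty$, i.e.\ zeros $x_j+w_j$ of $b$ with $|x_j+w_j-z_j|\le|w_j-iy_j|\to0$. The approximate zeros are thus localized at the very base points $z_j$ drawn from $\Sigma$, so separation is contradicted without any invariance or recurrence hypothesis on $\Sigma$; the self-adjoint case plays no special role. With this observation inserted, your proof is complete and is strictly more careful than the paper's, which never engages with the distinction between ``zeros separated from $\Sigma$'' and ``$|b|$ bounded below near $\Sigma$'' at all.
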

\begin{proof}
	In Corollary \ref{thm:NCNS_exist_strong} we've already identified the dense subset $D(H)$ of $X$ such that for any $\psi_1 \in D(H)$ there exists a genuine solution of \eqref{SchrodEqt}, \eqref{eq:linear_nc}. 
	Assumptions on the parameters of nonlocal condition imply the boundedness of $B^{-1}$. 
	In Section \ref{sec:hs_operator_calculus} we mentioned that $U(t)$ is bounded as well, thus the solution operator from \eqref{bp1IntRed}  is bounded. 
	To conclude the proof we recall that the propagator $U(t)$ forms the group for $t \in \R$, hence the bounded solution operator is also uniformly bounded \cite[Theorem 2.1]{Fattorini1985}. 
\end{proof}

\begin{example}\label{ex1}
	Let us consider a two point version of nonlocal problem \eqref{SchrodEqt}, \eqref{eq:linear_nc}. 
	In such simple case, nonlocal condition \eqref{eq:linear_nc} takes the form
	\begin{equation}\label{ex:nc_two_point}
	u(0)+\alpha_1 u(t_1) =u_0,\quad t_1>0.
	\end{equation}
	Here we assume that $H$ has all the properties mentioned in Theorem \ref{thm:NCNS_exist_mild}.
	To determine the location of zeros of $b(z)$ we need to solve the equation
	$$
	1 +\alpha_1e^{-zi t_1} = 0,
	$$
	assuming that $\alpha_1 \in \C$ and $t_1 \in [0, T]$ are given. It has an infinite number of solutions $z_m$ 
	\begin{equation}
	\begin{split}\label{kerB1p}
	z_m = & -\frac{1}{it_1}\ln\left(-\frac{1}{\alpha_1}\right)=
	\\
	= &\frac{1}{t_1}\left[\Arg\left(\frac{1}{\alpha_1}\right)+2\pi m  +i\ln\left|\frac{1}{\alpha_1}\right|\right], \quad m\in \Z. 
	\end{split}
	\end{equation}
	Here $\Arg\left(\cdot\right)$ stands for a principal value of the argument of complex number. 
	The zeros $z_m$ are situated on the line, where the imaginary part  $\Im{z}= \ln{\left |1/\alpha_1\right |}/t_1$ is constant. They will belong to $\C\setminus\Sigma$  if  $\left |\Im{z}\right |$ is greater than the half-height $d$ of the spectrum $\Sigma$ defined by \eqref{eq:SpHalfStrip}. Consequently, the solution of \eqref{SchrodEqt}, \eqref{ex:nc_two_point} exists  if and only if 
	\begin{equation}\label{eq:NCN_est_nc}
		|\alpha_1| < e^{-t_1d},\quad \text{or}\quad 	|\alpha_1| > e^{t_1d}.\\
	\end{equation}
	The given nonlocal problem is well-defined for any $\alpha_1 \in \C$, except for the complex numbers  lying in the annulus $e^{-t_1d}\leq |\alpha_1| \leq e^{t_1d}$. 
	
\end{example}
It is important to note that constraints \eqref{eq:NCN_est_nc} enforce $|\alpha_1|\neq 1$. That requirement can be relaxed for some $\psi_1, v(t)$ if the spectrum of $H$ is disjoint in the neighborhood of $\R$. 
%
Another unique feature of the two-point problem \eqref{SchrodEqt}, \eqref{ex:nc_two_point} 
is expressed by one's ability to write the closed-form solution \eqref{kerB1p}, without specifying $\alpha_1$ beforehand. 
It becomes impossible for the general case of multi-point nonlocal condition \eqref{eq:linear_nc}, where one must rely on the numerical procedures to solve $b(z)=0$ and for that reason predefine the parameters of nonlocal condition. 
For many applications of \eqref{SchrodEqt},\eqref{eq:linear_nc} with $n>1$ this is not enough as one still would like to have some apriori information about the admissible set of $\alpha_k$ rather than simply check the existence of solution  for a fixed sequence $\alpha_k$, $k=\overline{1,n}$. 

\section{Zeros of $b(z)$ and associated problem for polynomials}\label{sec:NCNS_polynomial_reduction}
To find a way around the direct solution of $b(z)=0$, $n>1$ we start with a general observation suggested by Example \ref{ex1}.
The function $b(z)$ can be arbitrary closely approximated by a periodic function ${b^{\star}(z)}\equiv 1+\sum_{k=1}^n{\alpha_k e^{(-i {t^{\star}_k} z)}}$, where each ${t^{\star}_k}$ is the rational approximation to the corresponding real number $t_k$, $k=\overline{1,n}$. 
The function $b^{\star}(z)$ better suits our needs than $b(z)$, because the equation $b^{\star}(z)=0$ can always be reduced to the polynomial root finding problem. 

Let 
	\begin{equation*}
	t_k= \frac{\lambda_k}{\mu_k}, \quad \lambda_k \in \mathbb{Z},\quad  \mu_k \in \mathbb{N},
	\end{equation*}
we set $c_k=\frac{Q \lambda_k}{\mu_k}$, where 
\[
Q = \frac{\mbox{LCM}(\mu_1,\mu_2, \ldots, \mu_n)}{\mbox{GCD}(\lambda_1,\lambda_2, \ldots, \lambda_n)}
\]
is the ratio of the least common multiple (LCM) and the greatest common divisor (GCD) of the numerators and denominators of $t_k$ correspondingly.
A substitution  
\begin{equation}
\label{eq:NCNS_tranfs}
\Phi:\ u=\exp{\left (-iz/Q \right )}
\end{equation}
transforms the original problem about the location of zeros of $b(z)$ in $\C\setminus\Sigma$ into the problem about the location of zeros of a polynomial
\begin{equation}
\label{eq:NCNS_zeros_pol}
r(u)=1+\suml_{k=1}^{n}\alpha_k u^{c_k}
\end{equation}
in the exterior of an annulus
\[
\Upsilon:\quad e^{-d/Q}\leq |u| \leq e^{d/Q}, \quad u \in \C.
\]
The polynomial root finding problem for $r(u)=0$ is extensively studied (see \cite{Milovanovic2000a, Milovanovic2000}, as well as \cite{Sendov1994625,Henrici1974v1}). 
Polynomial $r(u)$ has exactly $c_n$ roots $u_k$ over $\C$. 
Their closed form representation exists for $c_n\leq 4$. 
So, now we technically can write the exact solvability conditions for \eqref{SchrodEqt}, \eqref{eq:linear_nc} in terms of $\alpha_k$ for $k$ up to 4.    
More importantly, it is possible to avoid the full solution of $r(u)=0$ altogether whilst checking $u_k \in \C\setminus\Upsilon$:
\begin{equation}\label{eq:NCNS_ext_annulus}
 |u_k|< e^{-d/Q} \lor |u_k| > e^{d/Q}, \quad k=\overline{1,c_n}.
\end{equation}
The shape of $\Upsilon$ suggests that we focus on a subclass of available root finding methods with results stated in the form of bounds \eqref{eq:NCNS_ext_annulus}.
Among those, we choose three effective complex root bounds (see \cite{Batra2016,Rump2003,PhdThSytnykEN} for the discussion and comparisons) for $P(u)=\suml_{k=0}^{N}a_k u^k$.
We have ordered them by the increasing computational complexity. 
Each of the following bounds has been reformulated as a double estimate to better fit \eqref{eq:NCNS_ext_annulus}.
	\begin{lemma}(\cite{Milovanovic2000a}, Theorem 2.4)\label{lem:zpol_2}
		The zeros of $P(u)$ satisfy the following inequalities:
	\[
	\begin{split}
			|u|\leq \left(1+\left(\frac{M_s}{|a_N|}\right)^q \right)^{1/q},\quad
			|u|\geq \frac{|\a_0|}{\left(|\a_0|+M_s^q \right)^{1/q}}, \\ 			
			M_s=\left(\sum\limits_{k=1}^N |\a_k|^s\right)^{1/s},\quad s,\ q \in \R_{>1},\quad \frac{1}{s}+\frac{1}{q}=1
\end{split}
\]
	\end{lemma}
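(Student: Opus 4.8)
The plan is to prove both inequalities as Cauchy--Kojima-type root bounds directly from the defining equation $P(u)=0$, using H\"older's inequality together with the conjugacy relation $1/s+1/q=1$, and to obtain the upper bound from the lower one by passing to the reciprocal polynomial. First I would dispose of the trivial cases: since the right-hand side of the lower estimate is strictly less than $1$ and that of the upper estimate strictly greater than $1$, every zero with $|u|\ge 1$ automatically satisfies the lower bound and every zero with $|u|\le 1$ automatically satisfies the upper bound. Thus only $|u|<1$ is relevant for the lower estimate and only $|u|>1$ for the upper one, which is exactly the range in which the geometric series below converge usefully.

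For the lower bound, let $u$ be a zero with $0<|u|<1$. I would rewrite $P(u)=0$ as $a_0=-\sum_{k=1}^N a_k u^k$, pass to moduli, and apply H\"older's inequality with exponents $s$ and $q$ to the sum $\sum_{k=1}^N |a_k|\,|u|^k$, which factors it as $M_s\bigl(\sum_{k=1}^N |u|^{kq}\bigr)^{1/q}$ with $M_s$ exactly the quantity defined in the statement. The remaining geometric sum is controlled by $\sum_{k=1}^N |u|^{kq}\le |u|^q/(1-|u|^q)$ for $|u|<1$. Substituting this into $|a_0|\le M_s\bigl(|u|^q/(1-|u|^q)\bigr)^{1/q}$ and rearranging to isolate $|u|$ yields the stated lower estimate.

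For the upper bound I would apply the identical argument to the reciprocal polynomial $\widetilde P(w)=w^N P(1/w)=\sum_{k=0}^N a_{N-k}w^k$, whose zeros are the reciprocals $1/u$ of the zeros of $P$ and whose constant coefficient equals $a_N$. The lower bound applied to $\widetilde P$, followed by the inversion $|u|=1/|w|$, reproduces the upper estimate of the lemma. Alternatively one argues directly from $a_N u^N=-\sum_{k=0}^{N-1}a_k u^k$ for $|u|>1$: dividing by $u^N$, taking moduli, applying the same H\"older step, and using the geometric estimate $\sum_{k=0}^{N-1}|u|^{-(N-k)q}\le 1/(|u|^q-1)$ gives $|a_N|^q(|u|^q-1)\le M_s^q$, which rearranges into the claimed upper bound.

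The genuinely routine parts are H\"older's inequality and the geometric summation; the step demanding the most care is the algebraic rearrangement that isolates $|u|$ in the precise closed form stated, together with the verification that the split at $|u|=1$ exhausts all zeros so that the extremal coefficients $a_0,a_N$ enter only through the nonvanishing implicit in the bounds. I expect this bookkeeping, rather than any conceptual difficulty, to be the main obstacle.
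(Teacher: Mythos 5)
The paper does not prove this lemma at all --- it is imported verbatim (with citation) from Theorem~2.4 of Milovanovi\'c et al., so there is no internal proof to compare against. Your argument is precisely the standard proof behind that cited result: H\"older's inequality with conjugate exponents $s,q$ applied to $P(u)=0$, the geometric-series estimate, and the reciprocal polynomial $\widetilde P(w)=w^NP(1/w)$ for the outer bound, with the correct observation that the split at $|u|=1$ disposes of all remaining zeros. On method there is nothing to criticize.

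There is, however, a discrepancy you assert away rather than notice: what your computation actually proves is
\[
|u| \;\ge\; \frac{|a_0|}{\left(|a_0|^q + M_s^q\right)^{1/q}},
\qquad
|u| \;\le\; \left(1 + \left(\frac{\widetilde M_s}{|a_N|}\right)^q\right)^{1/q},
\qquad
\widetilde M_s = \Bigl(\sum_{k=0}^{N-1}|a_k|^s\Bigr)^{1/s},
\]
i.e.\ with $|a_0|^q$ (not $|a_0|$) in the denominator, and with the H\"older block in the upper bound running over $k=0,\dots,N-1$ (the non-constant coefficients of $\widetilde P$), not over $k=1,\dots,N$ as in the lemma's single $M_s$. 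These are not cosmetic: the statement as printed is false in both directions. For $P(u)=a_0+a_Nu^N$ one has $M_s=|a_N|$, so the printed upper bound gives $|u|\le 2^{1/q}$ while the zeros have modulus $(|a_0|/|a_N|)^{1/N}\to\infty$ as $|a_0|\to\infty$; and for $P(u)=a_0+u^3$ with $s=q=2$ the printed lower bound $|a_0|/(|a_0|+1)^{1/2}\sim|a_0|^{1/2}$ eventually exceeds the actual zero modulus $|a_0|^{1/3}$. (Your own ``trivial case'' step, which needs the lower right-hand side to be $<1$, already fails for the printed version --- a symptom of the same typos.) So your proof is correct, but it proves the corrected statement of Milovanovi\'c's theorem while claiming to reproduce the printed one; the honest conclusion is that the lemma as typeset in the paper contains two misprints ($|a_0|$ for $|a_0|^q$, and a single $M_s$ where the two bounds require sums over $k=1,\dots,N$ and $k=0,\dots,N-1$ respectively), and your derivation should say so explicitly rather than declare the rearrangement ``yields the stated'' bounds.
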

{
The next estimate is due to M. Fujiwara \cite{Fujiwara1916}. 
It is the nearly optimal homogeneous bound in the space of polynomials \cite{Batra2016}:
\begin{lemma}\label{lem:zpol_3}
	All zeros of  $P(u)$ satisfy the inequalities
	\[
	\begin{split}
	|u| \leq 2\max\left \{\left |\frac{\a_0}{2\a_N}\right |^{1/N}, \left |\frac{\a_1}{\a_N}\right |^{1/(N-1)}, \ldots, \left |\frac{\a_{N-1}}{\a_N}\right |\right \},\\
	|u| \geq \frac{1}{2}\min\left \{\left |\frac{2\a_N}{\a_0}\right |^{1/N}, \left |\frac{\a_N}{\a_1}\right |^{1/(N-1)}, \ldots, \left |\frac{\a_N}{\a_{N-1}}\right |\right \},
	\end{split}
	\]
	where $1/0=+\infty$.
\end{lemma}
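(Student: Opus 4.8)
The plan is to prove the two estimates separately: the upper bound by a direct contrapositive argument, and the lower bound by applying the upper bound to the reciprocal (reversed) polynomial. Assume $u$ is a root with $u\neq 0$ and rewrite the defining relation as $\a_N u^N = -\sum_{k=0}^{N-1}\a_k u^k$. Taking moduli, applying the triangle inequality, and dividing by $|\a_N||u|^N$ gives
\[
1 \leq \sum_{k=0}^{N-1}\frac{|\a_k|}{|\a_N|}\,|u|^{-(N-k)}.
\]
Let $c$ denote the maximum appearing in the asserted upper bound, so that the claim reads $|u|\leq 2c$. The key step is to show that $|u|>2c$ forces the right-hand side above to be strictly below $1$, a contradiction. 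This uses the definition of $c$: for $1\leq k\leq N-1$ one has $|\a_k|/|\a_N|\leq c^{N-k}$, while the factor $\tfrac12$ inside the $\a_0$-term yields $|\a_0|/|\a_N|\leq 2c^N$. Writing $r=|u|>2c$, the $k$-th summand is then at most $(c/r)^{N-k}<(1/2)^{N-k}$ for $k\geq 1$, and the $k=0$ summand is at most $2(c/r)^N<(1/2)^{N-1}$.

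Summing these upper estimates gives the geometric series
\[
\left(\tfrac12\right)^{N-1}+\sum_{m=1}^{N-1}\left(\tfrac12\right)^m = 1,
\]
so the actual sum is strictly less than $1$, contradicting the displayed inequality. Hence every root satisfies $|u|\leq 2c$, which is the upper bound. I would then obtain the lower bound by applying this result to the reversed polynomial $\wt{P}(u)=u^N P(1/u)=\sum_{k=0}^N \a_{N-k}u^k$, whose nonzero roots are exactly the reciprocals of the nonzero roots of $P$. Feeding $\wt{P}$ into the upper estimate and taking reciprocals, using $1/\max\{x_k\}=\min\{1/x_k\}$, converts the bound into the companion lower estimate for $|u|$; the convention $1/0=+\infty$ simply removes the terms coming from vanishing coefficients (equivalently, omits those terms from the maximum), and the degenerate case $u=0$, possible only when $\a_0=0$, is consistent with this convention and is checked directly.

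The main obstacle is the careful bookkeeping of constants and exponents in the geometric-series step. The outer factor $2$ together with the extra factor $2$ attached to the $\a_0$-term are precisely calibrated so that the dominating geometric series sums to exactly $1$ at the threshold $|u|=2c$; this is what turns the pointwise estimates into the strict inequality that yields the bound. Tracking the shifted exponents $N-k$ and keeping the constant-term contribution aligned with the geometric tail is where the argument must be carried out attentively, whereas the reversal yielding the lower bound is then purely formal.
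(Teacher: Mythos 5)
Your proof of the upper estimate is correct and is the classical Fujiwara argument: the contrapositive, the triangle inequality after dividing by $a_Nu^N$, and the geometric series calibrated so that the outer factor $2$ and the extra $2$ attached to the $a_0$-term make the dominating sum equal exactly $1$ at the threshold. Since the paper gives no proof of this lemma at all (it is quoted from Fujiwara's 1916 paper and merely ``reformulated as a double estimate''), that half of your argument stands on its own and needs no comparison.

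The lower-bound step, which you describe as ``purely formal,'' is precisely where the argument breaks. In $\widetilde{P}(u)=u^{N}P(1/u)=\sum_{j=0}^{N}a_{N-j}u^{j}$ the coefficient $a_k$ of $P$ sits in degree $N-k$, so when you feed $\widetilde{P}$ into the upper estimate it is measured against the new leading coefficient $a_0$ and acquires the exponent $1/k$, not $1/(N-k)$. Taking reciprocals therefore yields
\[
|u|\;\geq\;\frac{1}{2}\min\left\{\left|\frac{2a_0}{a_N}\right|^{1/N},\,\left|\frac{a_0}{a_1}\right|,\,\left|\frac{a_0}{a_2}\right|^{1/2},\,\ldots,\,\left|\frac{a_0}{a_{N-1}}\right|^{1/(N-1)}\right\},
\]
which is \emph{not} the displayed inequality: in the lemma as printed, each $a_k$ keeps the exponent $1/(N-k)$ and the entries are the term-by-term reciprocals of those in the upper bound, so the printed lower bound equals $\frac{1}{2}\cdot\frac{1}{c}$ with $c$ the maximum from the upper bound. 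That version is in fact false. For $P(u)=u^{2}+u+\frac{1}{10}$ the printed bound gives $\frac{1}{2}\min\{\sqrt{20},\,1\}=\frac{1}{2}$, while $P$ has the root $u=\bigl(-1+\sqrt{3/5}\bigr)/2\approx-0.113$; even more simply, $P(u)=u^{2}+u$ has the root $u=0$ while under the convention $1/0=+\infty$ the printed bound is $\frac{1}{2}$ (note this also contradicts your closing remark that the case $u=0$, $a_0=0$ is consistent with the printed convention --- it is consistent only with the corrected formula, whose first entry then vanishes). So your reversal, carried out with correct bookkeeping, proves the corrected dual bound above --- the genuine Fujiwara companion estimate --- and cannot prove the lemma as stated; the defect lies in the statement itself, apparently an index slip in the paper's double-estimate reformulation, and the corrected form is what should be used downstream in Theorem \ref{thm:NCNS_suf_cond_root_est}.
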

}  
The third estimate, originally proved by H.~Linden \cite{Linden1998}, gives bounds on the real and imaginary part of zeros separately. It has been adapted in \cite{PhdThSytnykEN} to fit within the framework studied here. 
\begin{lemma}\label{lem:zpol_4}
	All  zeros of $P(u)$ satisfy the double estimate $\max\{V_1^{-1},V_2^{-1}\} \leq |u|\leq \min\{V'_1,V'_2\}$, where
	$$
	V_1=\cos{\frac{\pi}{N+1}}+\frac{|\alpha_N|}{2|\alpha_0|}\left(\left|\frac{\alpha_1}{\alpha_N}\right|+\sqrt{1+\suml_{k=1}^{N-1}\left|\frac{\alpha_k}{\alpha_N}\right|^2}\right)
	$$
	\begin{multline*}
		V_2=\frac{1}{2}\left(\left|\frac{\alpha_1}{\alpha_0}\right|+\cos\frac{\pi}{N}\right)\\
		+\frac{1}{2}\left[\left(\left|\frac{\alpha_1}{\alpha_0}\right|-\cos\frac{\pi}{N}\right)^2+
		\left(1+\left|\frac{\alpha_N}{\alpha_0}\right|\sqrt{1+\suml_{k=2}^{N-1}\left|\frac{\alpha_k}{\alpha_N}\right|^2}\right)^2\right]^{1/2}
	\end{multline*}
	and $V'_i$ is  obtained from $V_i$ by the substitution $\a_k =\a_{N-k}$, $k=\overline{0,N}$, $i=1,2$.
\end{lemma}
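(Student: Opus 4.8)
The plan is to localize the zeros of $P(u)=\suml_{k=0}^N a_k u^k$ by identifying them with the eigenvalues of its Frobenius companion matrix $C$ and then controlling the spectrum of $C$ through its numerical range. Since the spectral radius never exceeds the numerical radius $w(C)=\max\{|\langle Cx,x\rangle|:\|x\|=1\}$, every zero obeys $|u|\leq w(C)$, which is what produces the upper bounds. The lower bounds require no separate machinery: the reciprocal polynomial $\wt{P}(u)=u^N P(1/u)=\suml_{k=0}^N a_{N-k}u^k$ has the reciprocals $1/u$ as its zeros, so any upper bound for the zeros of $\wt{P}$ is the reciprocal of a lower bound for the zeros of $P$. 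This is exactly the role of the primed/unprimed pair: applying the estimate to $P$ gives the upper bound $\min\{V'_1,V'_2\}$, while applying the same estimate to $\wt P$ (equivalently, performing the substitution $a_k\mapsto a_{N-k}$) and inverting gives the lower bound $\max\{V_1^{-1},V_2^{-1}\}$. Thus it suffices to establish the two upper estimates for $P$ and then invoke the reciprocal transformation.

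For the first estimate I would write $C$ (after dividing through by $a_N$) as the sum of the nilpotent subdiagonal shift $S$, carrying the ones of the companion structure, and a rank-one correction $R=ab^{*}$ whose single nontrivial column collects the remaining coefficients. Subadditivity of the numerical radius gives $w(C)\leq w(S)+w(R)$. The shift contributes the classical value $w(S)=\cos\frac{\pi}{N+1}$, the numerical radius of the $N\times N$ Jordan block, which accounts for the leading cosine term. For the rank-one part I would use the exact formula $w(ab^{*})=\tfrac12\left(|\langle a,b\rangle|+\|a\|\,\|b\|\right)$; combined, after an appropriate companion-matrix convention, with a diagonal similarity that isolates the coefficient $a_0$ (this is what produces the prefactor $|a_N|/|a_0|$ and the $1$ appearing under the square root), it yields the bracketed term of $V_1$. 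The second estimate $V_2$ comes from a finer splitting in which a $2\times2$ corner of $C$ is treated separately: the remaining shift then has numerical radius $\cos\frac{\pi}{N}$, and the corner contributes the largest eigenvalue of a $2\times2$ Hermitian block, whose standard form $\tfrac12(\mathrm{tr})+\tfrac12\sqrt{(\mathrm{difference})^2+\cdots}$ reproduces the outer structure of $V_2$.

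The main obstacle is not the reciprocal bookkeeping but the precise evaluation of the numerical radii of the two structured summands together with the optimization of the auxiliary diagonal scaling. Obtaining the clean coefficient expressions — in particular the factor $|a_N|/|a_0|$ in $V_1$ and the matching $2\times2$ eigenvalue form in $V_2$ — requires choosing the similarity so that the rank-one (respectively $2\times2$) remainder is balanced against the shift part, and then verifying that the resulting scalar bound is independent of the scaling parameter at the optimum. Since this estimate is Linden's \cite{Linden1998}, I would at this point simply cite it for the two upper bounds $V'_1,V'_2$ and restrict the written argument to the reciprocal-polynomial step that converts them into the symmetric double estimate $\max\{V_1^{-1},V_2^{-1}\}\leq|u|\leq\min\{V'_1,V'_2\}$, which is the adaptation carried out in \cite{PhdThSytnykEN}.
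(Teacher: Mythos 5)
The paper itself gives no proof of this lemma---it is quoted from Linden \cite{Linden1998}, with the two-sided form taken from the adaptation in \cite{PhdThSytnykEN}---and your proposal lands in exactly the same place: you cite Linden for the two upper bounds and write out only the reciprocal-polynomial conversion, whose bookkeeping checks out, since $\wt{P}(u)=u^N P(1/u)$ has zeros $1/u$, the substitution $a_k\mapsto a_{N-k}$ is an involution, and therefore $V_i'$ is Linden's direct bound for $P$ while $V_i^{-1}$ inverts his bound for $\wt{P}$ (with the degenerate case $a_0=0$ handled automatically, as then $V_i=\infty$ and the lower bound is $0$). Your sketch of Linden's underlying mechanism---companion matrix, $w(C)\le w(S)+w(R)$ with $w(S)=\cos\frac{\pi}{N+1}$ for the nilpotent shift and $w(ab^{*})=\frac{1}{2}\left(|\langle a,b\rangle|+\|a\|\,\|b\|\right)$ for the rank-one part, plus the $2\times2$ compression giving the eigenvalue form of $V_2$ with $\cos\frac{\pi}{N}$ from the smaller shift---is likewise accurate, except that no auxiliary diagonal similarity is involved: the prefactor $|a_N|/(2|a_0|)$ and the $1$ under the square root in $V_1$ appear automatically once the reversed polynomial is normalized by its leading coefficient $a_0$ and $|a_N|$ is factored out of the Euclidean norm of the coefficient vector.
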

Now we are in the position to formulate our next result.
\begin{theorem}\label{thm:NCNS_suf_cond_root_est}
Suppose that operator $H$ from \eqref{SchrodEqt} satisfies the assumptions of Theorem \ref{thm:NCNS_exist_mild} and
all $t_k$ in \eqref{eq:linear_nc} are rational numbers.
If at least one bound from Lemmas \ref{lem:zpol_2} - \ref{lem:zpol_4} for polynomial \eqref{eq:NCNS_zeros_pol} induce \eqref{eq:NCNS_ext_annulus}, then the nonlocal problem \eqref{SchrodEqt}, \eqref{eq:linear_nc} has the following properties:
\begin{enumerate}
	\item \label{NCNS_prop1} it is uniformly well-posed in $t \in \R$;
	\item for any $\psi_1 \in X$, $v \in L^1((0;T),X)$ there exists mild solution \eqref{bp1IntRed} with the characteristics mentioned in Theorem \ref{thm:NCNS_exist_mild};
	\item \label{NCNS_prop3}solution \eqref{bp1IntRed} will also be strong if $\psi,v(t)$ satisfy either of the requirements a) or b) from Corollary \ref{thm:NCNS_exist_strong}.
\end{enumerate}
\end{theorem}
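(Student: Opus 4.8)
The plan is to reduce the three claimed properties to the already-established Theorem~\ref{thm:NCNS_exist_mild} together with Corollaries~\ref{thm:NCNS_exist_strong} and~\ref{thm:NCNS_well_posed}, by transporting the root information supplied by Lemmas~\ref{lem:zpol_2}--\ref{lem:zpol_4} through the conformal substitution $\Phi$ of \eqref{eq:NCNS_tranfs}. Since the hypotheses on $H$ are inherited verbatim from Theorem~\ref{thm:NCNS_exist_mild}, the only work is to pin down the location of the zeros of $b(z)$. First I would record the exact correspondence between the two zero-sets: writing $z=x+iy$ one has $|u|=|\exp(-iz/Q)|=e^{\Im z/Q}$, and, because $t_k=\lambda_k/\mu_k$ and $c_k=Q\lambda_k/\mu_k$, the identity $u^{c_k}=e^{-it_kz}$ gives $r(u)=b(z)$ whenever $u=\Phi(z)$. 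Hence every zero $z_0$ of $b$ is carried to the nonzero number $u_0=\Phi(z_0)$, which is one of the $c_n$ roots of the polynomial $r$, while conversely each root of $r$ lifts to a whole $\Phi$-fibre of zeros of $b$. This yields a complete accounting of the infinite zero-set of $b$ in terms of finitely many roots of $r$.

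Next I would translate the region condition. Because $|u|=e^{\Im z/Q}$ depends on $\Im z$ alone, the annulus $\Upsilon$ is exactly the $\Phi$-image of the strip $\Sigma_d=\{z:\ |\Im z|\leq d\}$, and the exterior region described by \eqref{eq:NCNS_ext_annulus} is the image of the open set $\{z:\ |\Im z|>d\}$, i.e.\ of the interior of $\C\setminus\Sigma$. The structural point is that all zeros of $b$ lying over a fixed root $u_0$ of $r$ share the common imaginary part $\Im z=Q\ln|u_0|$; thus, although $b$ has infinitely many zeros, they assemble into $c_n$ horizontal families, and the position of each family relative to $\Sigma$ is controlled by the single radius $|u_0|$.

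Then I would invoke the hypothesis. Each of Lemmas~\ref{lem:zpol_2}--\ref{lem:zpol_4} confines every root of $r$ to an interval $L\leq|u_k|\leq U$; to say that such a bound \emph{induces} \eqref{eq:NCNS_ext_annulus} is to say that this connected interval is pushed entirely into one component of the exterior of $\Upsilon$, that is, either $U<e^{-d/Q}$ or $L>e^{d/Q}$. In either case there is a strictly positive radial gap between the roots $u_k$ and the two boundary circles of $\Upsilon$, and by the previous paragraph this transfers to a uniform gap between the imaginary parts of all zeros of $b$ and the lines $\Im z=\pm d$. Consequently every zero of $b$ is separated from $\Sigma$, and a fortiori lies in the interior of $\C\setminus\Sigma$. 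Applying Theorem~\ref{thm:NCNS_exist_mild} delivers property~2, Corollary~\ref{thm:NCNS_well_posed} (whose hypothesis is precisely this separation) delivers property~1, and Corollary~\ref{thm:NCNS_exist_strong} under assumption a) or b) delivers property~3.

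The main obstacle I anticipate is not any individual estimate but making the transfer of separation rigorous in spite of $\Phi$ being non-injective and $b$ having infinitely many zeros: one must ensure that no zero of $b$ escapes the finite family structure and that a radial gap in the $u$-plane yields a genuinely uniform, not merely pointwise, separation in the $z$-plane. This is settled by the observation that the $\Phi$-fibre over a fixed $u_0$ consists of points with one and the same imaginary part, so the finitely many radii $|u_k|$ simultaneously govern the entire zero-set of $b$.
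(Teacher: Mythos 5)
Your proof is correct and follows essentially the same route as the paper's: both transport the polynomial root bounds through $\Phi^{-1}$, observe that $\Im z = Q\ln|u_k|$ forces every zero of $b(z)$ strictly outside the strip $\Sigma_d$, and then conclude by invoking Theorem \ref{thm:NCNS_exist_mild} together with Corollaries \ref{thm:NCNS_exist_strong} and \ref{thm:NCNS_well_posed}. Your explicit accounting of the $\Phi$-fibres (all zeros of $b$ over a fixed root $u_0$ sharing the common imaginary part $Q\ln|u_0|$, so that finitely many radii control the infinite zero-set) merely spells out what the paper leaves implicit in its one-line formula for $z_k = \Phi^{-1}(u_k)$.
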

\begin{proof}
If the zeros $u_k$ of \eqref{eq:NCNS_zeros_pol} obey \eqref{eq:NCNS_ext_annulus}, their images
\[
	z_k = \Phi^{-1}(u_k) = 	Q\left[\Arg\left(u_k\right)+2\pi m  +i\ln\left|u_k\right|\right],  	
\]
are clearly in the interior of $\C\setminus\Sigma$ no matter what is the value of $m\in \Z$.
The application of Theorem \ref{thm:NCNS_exist_mild} and Corollaries  \ref{thm:NCNS_exist_strong},\ref{thm:NCNS_well_posed}
concludes the proof.
\end{proof}


 The result of Theorem \ref{thm:NCNS_suf_cond_root_est} can be turned into criteria by enforcing the necessary and sufficient conditions 
 for the validity of \eqref{eq:NCNS_ext_annulus}  derived via the Schur-Cohn algorithm \cite[p. 493]{Henrici1974v1}. 
 For a given polynomial $r(u)$ the algorithm produces 
 a set of up to $2 c_n$ inequalities, that are polynomial in $\alpha_k$, $k=\overline{1,n}$. These inequalities need to be valid simultaneously in order for the Schur-Cohn test to pass \cite[Thm. 6.8b]{Henrici1974v1}. The precise result is stated as follows.
 \begin{theorem}\label{thm:NCNS_criteria}
 Suppose that operator $H$ from \eqref{SchrodEqt} satisfies the assumptions of Theorem \ref{thm:NCNS_exist_mild} and
 all $t_k$ in \eqref{eq:linear_nc} are rational numbers.
 Nonlocal problem \eqref{SchrodEqt}, \eqref{eq:linear_nc} has properties \ref{NCNS_prop1}--\ref{NCNS_prop3} of Theorem \ref{thm:NCNS_suf_cond_root_est} if and only if  
 the polynomials $b(e^{d/Q}u)$, $u^{c_n} b(e^{-d/Q}u)$ pass the Schur-Cohn test for the given set of parameters $\alpha_k \in \C$, $k=\overline{1,n}$ from \eqref{eq:linear_nc}. 
 \end{theorem}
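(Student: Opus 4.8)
The plan is to assemble the statement from the equivalences already in hand, converting the geometric condition \eqref{eq:NCNS_ext_annulus} into an algebraic one. First I would invoke Theorem \ref{thm:NCNS_exist_mild} together with Corollaries \ref{thm:NCNS_exist_strong} and \ref{thm:NCNS_well_posed}: properties \ref{NCNS_prop1}--\ref{NCNS_prop3} hold simultaneously if and only if every zero of $b(z)$ is separated from the spectral strip $\Sigma$, i.e.\ lies in the interior of $\C\setminus\Sigma$. Since all $t_k$ are rational, the conformal substitution $\Phi$ from \eqref{eq:NCNS_tranfs} is available and carries $b(z)$ onto the polynomial $r(u)$ of \eqref{eq:NCNS_zeros_pol}; as recorded in Section \ref{sec:NCNS_polynomial_reduction}, $\Phi$ maps $\C\setminus\Sigma$ onto the exterior of the annulus $\Upsilon$. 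Hence the whole of properties \ref{NCNS_prop1}--\ref{NCNS_prop3} is equivalent to requirement \eqref{eq:NCNS_ext_annulus}: none of the $c_n$ roots $u_k$ of $r(u)$ lies in the closed annulus $\Upsilon$.

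The second step reduces \eqref{eq:NCNS_ext_annulus} to two root-location problems relative to the unit circle, the native setting of the Schur-Cohn algorithm. Rescaling $u\mapsto e^{d/Q}u$ sends the outer bounding circle $|u|=e^{d/Q}$ onto $|u|=1$, so the zeros of $b(e^{d/Q}u)$ record the position of the $u_k$ with respect to the outer circle; likewise the reversed, rescaled polynomial $u^{c_n}b(e^{-d/Q}u)$ encodes their position with respect to the inner circle $|u|=e^{-d/Q}$. I would then apply the Schur-Cohn algorithm in the form of \cite[Thm.~6.8b]{Henrici1974v1} to each of these polynomials: when a polynomial has no zeros on the unit circle, the algorithm yields a finite sequence of determinant inequalities, polynomial in the coefficients and hence in the $\alpha_k$, which are necessary and sufficient for the prescribed count of zeros inside the unit disk. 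Collecting the (up to $2c_n$) inequalities coming from the two polynomials provides an algebraic characterization of \eqref{eq:NCNS_ext_annulus} purely in terms of $\alpha_k$.

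Finally I would close the equivalence: because each link in the chain---Theorem \ref{thm:NCNS_exist_mild} and its corollaries, the bijectivity of $\Phi$ between $\C\setminus\Sigma$ and the exterior of $\Upsilon$, and the Schur-Cohn criterion---is itself an equivalence, the simultaneous validity of the two polynomials' Schur-Cohn inequalities is necessary and sufficient for properties \ref{NCNS_prop1}--\ref{NCNS_prop3} to hold for the given $\alpha_k$.

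I expect the principal obstacle to be the faithful coupling of the two single-circle tests into a test for the band-exterior condition \eqref{eq:NCNS_ext_annulus}. Because the $u_k$ may distribute on both sides of $\Upsilon$---some inside the inner disk, some outside the outer one---neither polynomial can be required to be Schur-stable outright; the argument must instead use the counting content of the Schur-Cohn determinant sequence, so that the two tests jointly force the number of roots inside the inner circle to equal the number inside the outer circle, leaving the annulus empty. Care is also needed at the boundaries, to verify that separation from $\Sigma$ corresponds to the strict, open-disk inequalities and that no $u_k$ lands on $|u|=e^{\pm d/Q}$, and with the normalization ensuring $\deg r = c_n$ (i.e.\ $\alpha_n\neq 0$) so that the determinant sequences have their expected length.
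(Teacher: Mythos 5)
Your proposal follows the paper's own proof essentially verbatim: the same equivalence chain through Theorem \ref{thm:NCNS_exist_mild} and Corollaries \ref{thm:NCNS_exist_strong}, \ref{thm:NCNS_well_posed}, the map $\Phi$ (whose bijectivity on a fundamental period strip of $b(z)$ is what the paper uses to secure necessity), and the rescaling/reversal of $r(u)$ that turns \eqref{eq:NCNS_ext_annulus} into unit-circle root-location tests handled by \cite[Thm.~6.8b]{Henrici1974v1}. The counting subtlety you flag---that roots may lie on both sides of $\Upsilon$, so neither polynomial can be required to be Schur-stable outright---is genuine but is precisely what the paper absorbs into its remark that the algorithm produces up to $2c_n$ simultaneous polynomial inequalities (whence the disjunctive structure of \eqref{eqintshura2}), so your resolution via the root-counting form of the Schur-Cohn test coincides with the intended reading of ``pass the Schur-Cohn test.''
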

 \begin{proof}
 	The substitution $u = e^{d/Q}u'$ ($u = u'^{c_n} b(e^{-d/Q}u')$) transforms right (left) inequality from \eqref{eq:NCNS_ext_annulus} into the inequality $u'_k > 1$. 
 	In both cases the validity of the last inequality is checked by the Schur-Cohn test \cite[Thm. 6.8b]{Henrici1974v1}. 
 	"If" clause of Theorem \ref{thm:NCNS_exist_mild} along with Corollaries  \ref{thm:NCNS_exist_strong},\ref{thm:NCNS_well_posed} assures the sufficiency. 
 	Mapping $\Phi$ is a bijection of the vertical strip $|\Im{z}|\leq \pi \frac{\mbox{GCD}(\mu_1,\mu_2, \ldots, \mu_n)}{\mbox{LCM}(\lambda_1,\lambda_2, \ldots, \lambda_n)}$ onto $\C$. The strip's height equals to the period of $b(z)$. This fact guaranties the necessity via application of the "only if" clause of Theorem \ref{thm:NCNS_exist_mild} and Corollaries  \ref{thm:NCNS_exist_strong},\ref{thm:NCNS_well_posed}. 
 \end{proof}

It remains to study the following question:  what happens when some of $t_k$ are irrational?
Consider an approximation $b^{\star}(z)$ of $b(z)$ mentioned above.  If $t^{\star}_k \rightarrow t_k$, $k=\overline{1,n}$ the function $b^{\star}(z)$ uniformly converges to $b(z)$ on the compact subsets of the open set containing $\Sigma$. 
Hurwitz theorem \cite[Corollary 4.10f]{Henrici1974v1} provides the means to claim that all zeros of $b(z)$ lies in the interior of $\C\setminus\Sigma$, if that is true for $b^{\star}(z): t^{\star}_k \rightarrow t_k$. 
The degree of a polynomial $r^{\star}(u)$ corresponding to $b^{\star}(z)$ grows to $\infty$ when $t^{\star}_k \rightarrow t_k$ and this $t_k$ is irrational. 
But, its coefficients $\alpha_k$ are not affected by the increase of $c_n^{\star}$.
This keeps the root estimates from Lemmas \ref{lem:zpol_2}--\ref{lem:zpol_4} meaningful. 
As a result we have arrived at the following corollary.
\begin{corollary}
Assume that for every $k=1,\ldots, n$ the sequence of rational numbers $\left \{t^{\star}_{kl}\right \}_{l=1}^\infty$, is such that $\lim\limits_{l \rightarrow \infty} t^{\star}_{kl}=t_k$. If the conditions of Theorem \ref{thm:NCNS_suf_cond_root_est} regarding the roots of $r^{\star}_l(u)$ associated with $t^{\star}_{kl}$, $k=\overline{1,n}$ are fulfilled for all $l>0$, then the rest of theorem's statement remains valid for $t_k \in \R$.
\end{corollary}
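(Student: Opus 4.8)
The plan is to convert the whole family of rational problems into a single limiting statement about the zeros of $b(z)$, using the scale-invariance of the target region together with the Hurwitz theorem announced just before the statement. First I would record what the hypothesis gives for each fixed $l$. By assumption the bounds of Lemmas \ref{lem:zpol_2}--\ref{lem:zpol_4} applied to $r^\star_l(u)$ induce \eqref{eq:NCNS_ext_annulus}, so, exactly as in the proof of Theorem \ref{thm:NCNS_suf_cond_root_est}, each zero $u$ of $r^\star_l$ satisfies $|\ln|u||>d/Q_l$ and its image $z=\Phi^{-1}(u)$ under \eqref{eq:NCNS_tranfs} obeys $|\Im z|>d$. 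The crucial observation, already made in the paragraph preceding the statement, is that although the degree $c_n^\star$ and the scale $Q_l$ both depend on $l$, the admissible region for the zeros in the $z$-plane is the fixed, $l$-independent set $\{z:\ |\Im z|>d\}$. Hence the hypothesis is equivalent to the assertion that for every $l$ the entire function $b^\star_l(z)=1+\suml_{k=1}^n \alpha_k e^{-i t^\star_{kl} z}$ has no zeros in the closed strip $\Sigma_d=\{z:\ |\Im z|\le d\}\supseteq\Sigma$.

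Second I would establish locally uniform convergence $b^\star_l\to b$. On any compact $K\subset\C$ one has $|e^{-i t^\star_{kl} z}-e^{-i t_k z}|\le |t^\star_{kl}-t_k|\,\sup_K|z|\,\exp(\max(|t^\star_{kl}|,|t_k|)\sup_K|z|)$, which tends to $0$ since $t^\star_{kl}\to t_k$; summing over the finitely many $k$ gives $b^\star_l\to b$ uniformly on $K$. Because $b(iy)=1+\suml_{k=1}^n \alpha_k e^{t_k y}\to 1$ as $y\to-\infty$ (all $t_k>0$), the entire function $b$ is not identically zero. Each $b^\star_l$ being zero-free on the open strip $\{z:\ |\Im z|<d\}$ and $b\not\equiv 0$, Hurwitz's theorem \cite[Corollary 4.10f]{Henrici1974v1} yields that $b$ is itself zero-free on that open strip. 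Invoking Theorem \ref{thm:NCNS_exist_mild} and Corollaries \ref{thm:NCNS_exist_strong}, \ref{thm:NCNS_well_posed} would then transfer properties \ref{NCNS_prop1}--\ref{NCNS_prop3} to the problem with real $t_k$.

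The main obstacle is the boundary $|\Im z|=d$: a bare application of Hurwitz excludes zeros of $b$ only from the \emph{open} strip, because the zeros of $b^\star_l$ are permitted to accumulate on the lines $\Im z=\pm d$ from outside, and a Rouch\'e count around a putative boundary zero of $b$ is perfectly consistent with such accumulation (Example \ref{ex1} shows this already happens at $|\alpha_1|=e^{-t_1 d}$). To close the gap I would extract a uniform separation margin. The quantities entering the bounds of Lemmas \ref{lem:zpol_2}--\ref{lem:zpol_4} are built from the \emph{fixed} coefficients $\alpha_k$ (only their positions, i.e.\ the exponents $c_k$, change with $l$): for Lemma \ref{lem:zpol_2} the radii $M_s=(\suml_{k=1}^n|\alpha_k|^s)^{1/s}$ are literally independent of $l$, while for the other two they involve only the ratios $\alpha_k/\alpha_n$ and the degree through vanishing exponents, hence converge. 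Since the annulus $\Upsilon$ contracts to the unit circle as $Q_l\to\infty$, the persistent strict validity of \eqref{eq:NCNS_ext_annulus} forces $\liminf_l \operatorname{dist}(\{z:\ b^\star_l(z)=0\},\Sigma_d)=:\eta>0$.

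With this $\eta$ in hand the functions $b^\star_l$ are uniformly zero-free on the enlarged open strip $\{z:\ |\Im z|<d+\eta\}$, so Hurwitz applied there gives $b\neq 0$ on the closed strip $\Sigma_d$, and a fortiori on $\Sigma$, which is what Theorem \ref{thm:NCNS_exist_mild} requires. I expect verifying the uniform gap $\eta>0$ to be the only genuinely delicate point; the remaining steps are the routine convergence estimate and the two-line citation of Hurwitz and of the earlier existence and well-posedness results.
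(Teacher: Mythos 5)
Your skeleton is exactly the paper's own argument: the paper proves this corollary by the paragraph immediately preceding it, namely locally uniform convergence $b^{\star}_l\to b$ on compact sets containing $\Sigma$ plus the Hurwitz theorem \cite[Corollary 4.10f]{Henrici1974v1}, followed by an appeal to Theorem \ref{thm:NCNS_suf_cond_root_est}. You are also right, and more careful than the paper, in flagging the boundary defect: Hurwitz only excludes zeros of $b$ from the \emph{open} strip $\left \{z:\ |\Im z|<d\right \}$, whereas Theorem \ref{thm:NCNS_exist_mild} needs them off the closed set $\Sigma\subseteq\Sigma_d$, so some uniform separation is indispensable. The problem is your patch. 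The claim that ``persistent strict validity of \eqref{eq:NCNS_ext_annulus} forces $\liminf_l \operatorname{dist}(\{b^{\star}_l=0\},\Sigma_d)=\eta>0$'' rests on the premise that $Q_l\to\infty$, i.e.\ that $\Upsilon$ contracts to the unit circle. That premise fails whenever the $t_k$ are pairwise commensurable with an irrational common unit (already for $n=1$): there the degree $c_n$ of $r^{\star}_l$ is constant, $Q_l\to Q_\infty<\infty$, the thresholds $e^{\pm d/Q_l}$ converge to numbers different from $1$, and nothing prevents the $l$-independent root bound from hitting the limiting threshold exactly.

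Concretely, take $n=1$, $t_1=\sqrt{2}$, $\alpha_1=e^{-d\sqrt{2}}$, and rationals $t^{\star}_l\uparrow t_1$. Then $r^{\star}_l(u)=1+\alpha_1 u$ has its single root at modulus $e^{dt_1}>e^{dt^{\star}_l}=e^{d/Q_l}$, so \eqref{eq:NCNS_ext_annulus} is induced strictly for \emph{every} $l$ (the Fujiwara-type lower bound, in its correct orientation with ratios $a_0/a_k$ --- note the ratios in Lemma \ref{lem:zpol_3} as printed are inverted --- is sharp for binomials and equals $e^{dt_1}$, independent of $l$). Yet the zeros of $b^{\star}_l$ lie on the line $\Im z = d\,t_1/t^{\star}_l\downarrow d$, so your margin is $\eta=0$, and the limit $b(z)=1+\alpha_1 e^{-it_1 z}$ has its zeros exactly on $\Im z=d$ --- consistent with Example \ref{ex1}, which declares $|\alpha_1|=e^{-t_1 d}$ inadmissible. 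If the strip-type operator $H$ has an eigenvalue at such a boundary zero (permitted, since $\Sigma\subseteq\Sigma_d$ is closed), the necessity half of Theorem \ref{thm:NCNS_exist_mild} destroys the conclusion, so the hypotheses as stated simply do not imply $\eta>0$; you anticipated boundary accumulation from Example \ref{ex1} but then asserted the gap away. A correct repair must \emph{add} a hypothesis --- e.g.\ that the limiting inequalities $\lim_l|u|\gtrless e^{\pm d/Q_l}$ remain strict, or that the bounds certify a slightly enlarged annulus uniformly in $l$ --- after which your enlarged-strip Hurwitz argument goes through. To be fair, the paper's one-line assertion that the zeros of $b$ end up ``in the interior of $\C\setminus\Sigma$'' suffers from the identical boundary defect; your proposal's merit is to have exposed it, but as written it does not close it.
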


Now, we would like to compare the conditions on $\alpha_{1},\alpha_{2} \in \R$ obtained with help of Theorems \ref{thm:NCNS_suf_cond_root_est}, \ref{thm:NCNS_criteria} against the previously known condition \eqref{estLiang2002}. 
\begin{example}
	Let us consider a three point nonlocal condition
	\begin{equation}
		u(0)+\alpha_1 u(t_1)+\alpha_2 u(t_2)=0,\quad t_1,t_2 >0.
	\end{equation}
	For simplicity we set $t_1=1,\ t_2=2$ and consider the non-zero spectral half-height $d=\pi/40$.
	Then, the equation $b(z)=0$ is reduced to $1+\alpha_1 u + \alpha_2 u^2=0$. 
	As shown in Fig. \ref{fig:NCNS_est_comp} b), the exact conditions on $\alpha_{1},\alpha_{2}$ calculated by Theorem  \ref{thm:NCNS_criteria} (the Schur-Cohn algorithm)
	\begin{equation}\label{eqintshura2}
	\left[
	\begin{array}{l}
	\begin{cases}
	|a_{{2}}|^{2} < \e^{-4d},\\
	\e^{4d}|a_{{1}}|^{2}|a_{{2}}|^{2}-\e^{6d}|a_{{2}}|^{4}-2 \e^{4d}|a_{{2}}|\left (|a_{{1}}|^{2}-|a_{
			{2}}|\right) +|a_{{1}}|^{2} < \e^{-2d}
	\end{cases} 
	 \\
	\begin{cases}
	|a_{{2}}|^{2} > \e^{4d},\\
	\e^{-4d}|a_{{1}}|^{2}|a_{{2}}|^{2}-\e^{-6d}|a_{{2}}|^{4}-2 \e^{-2d}|a_{{2}}|\left (|a_{{1}}|^{2}-|a_{
		{2}}|\right) +|a_{{1}}|^{2} > \e^{2d}
	\end{cases}
	\end{array}
	\right.
	\end{equation}
	lead to a considerably wider class of admissible pairs $\left (\alpha_{1},\alpha_{2}\right )$ than those obtained by \eqref{estLiang2002}. 
	\begin{figure}[hbt]%
		\begin{tabular}{c}
			\begin{overpic}[width=0.46\linewidth]%
				{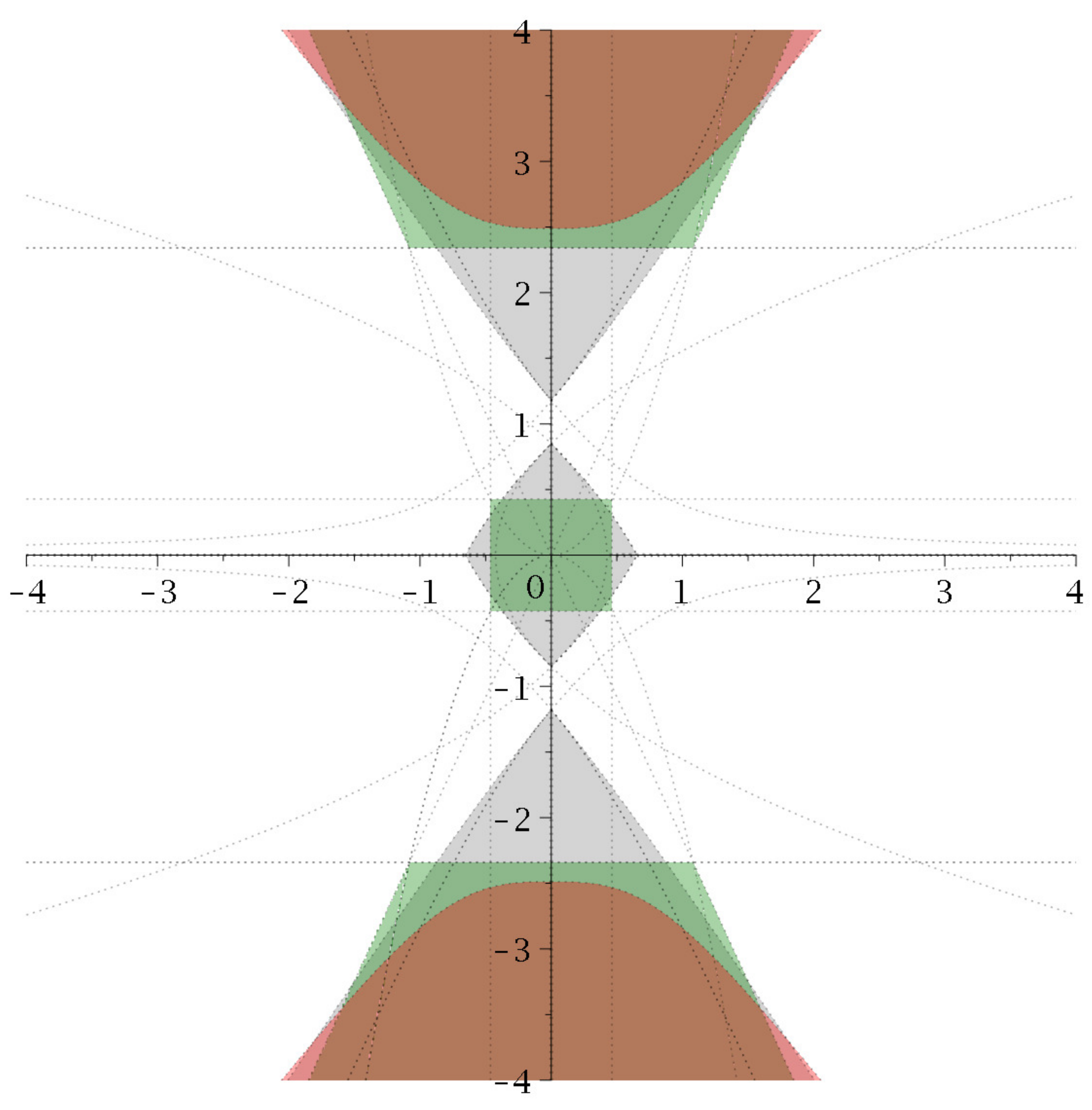}
				\put(5,5){\footnotesize\textbf{a)}}
				\put(90,52){\footnotesize $\alpha_1$}
				\put(53,90){\footnotesize $\alpha_2$}
			\end{overpic}
			\hfill
			\begin{overpic}[width=0.46\linewidth]%
				{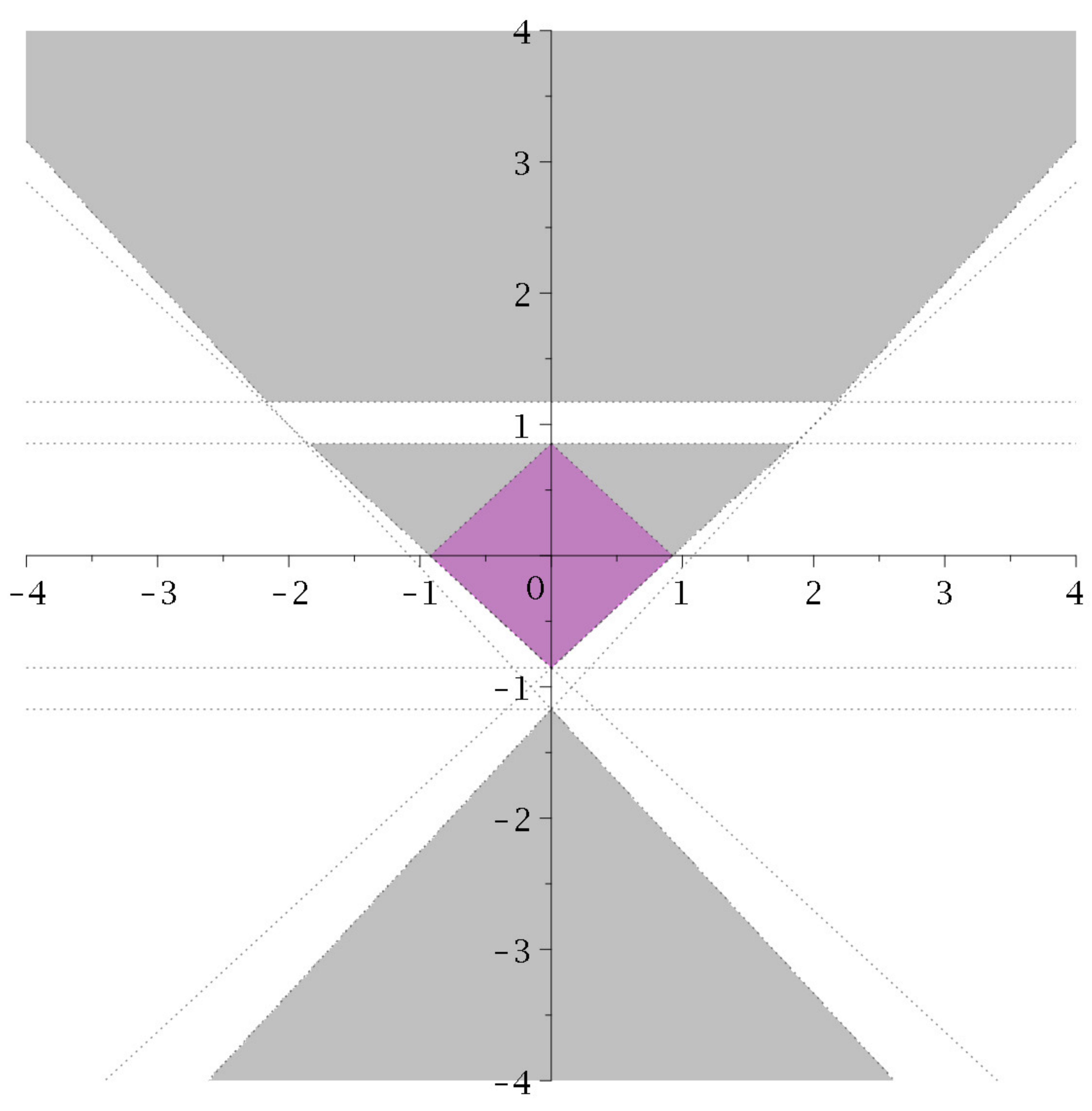}
				\put(5,5){\footnotesize\textbf{b)}}
				\put(90,52){\footnotesize $\alpha_1$}
				\put(53,90){\footnotesize $\alpha_2$}
			\end{overpic}
		\end{tabular}
		\caption{The regions (filled) in the space of parameters $\alpha_{1},\alpha_{2} \in \R$ from \eqref{eq:linear_nc} where problem \eqref{SchrodEqt},\eqref{eq:linear_nc} is well-posed, $d=\pi/40$, $t_1=1, t_2=2$  (color online). \newline
			\textbf{a)} Application of Theorem \ref{thm:NCNS_suf_cond_root_est} and root estimates from:  Lemma \ref{lem:zpol_2} with $s=q=2$ -- dark gray (red); Lemma \ref{lem:zpol_3} -- middle gray (green); Lemma \ref{lem:zpol_4} -- light gray; \newline
			\textbf{b)} The complete set of feasible $\left (\alpha_{1},\alpha_{2}\right )$ via the application of Theorem \ref{thm:NCNS_criteria} -- gray, and set of pairs based on the estimate \eqref{estLiang2002} -- dark grey (violet). }
		\label{fig:NCNS_est_comp}
	\end{figure}
	In fact, the second system of inequalities from \eqref{eqintshura2} gives rise to the unbounded region (union of two unbounded sets depicted in Fig. \ref{fig:NCNS_est_comp} b) in the space of parameters $\alpha_{1},\alpha_{2} \in \R$, meanwhile the solutions of \eqref{estLiang2002} are strictly bounded in $|\alpha_1|, |\alpha_2|$ (the interior of the rhombic region in Fig. \ref{fig:NCNS_est_comp} b). They lay within the isosceles triangle which acts as graphical solution of the first system of inequalities in \eqref{eqintshura2}.   
	The gap between this triangle and the two other regions containing the solutions of \eqref{eqintshura2} shortens when $d\rightarrow 0$, and in the limit is described by $|\alpha_2|=1$. 
	Comparison of generalized condition \eqref{estLiang2002} from \cite{ashyralyev2008nonlocal,Byszewski1992} and the sufficient conditions provided by Theorem \ref{thm:NCNS_suf_cond_root_est} (depicted in Fig. \ref{fig:NCNS_est_comp} a) unveils that \eqref{estLiang2002} performs better than the inner circle estimates  of Lemmas \ref{lem:zpol_2}--\ref{lem:zpol_4} (the part $\C \setminus \Upsilon$ defined by the first inequality in \eqref{eq:NCNS_ext_annulus}). 
	Therefore, when it comes to the apriori estimates on the parameters of nonlocal condition, we advice to use the combination of \eqref{estLiang2002} and the part of Theorem \ref{thm:NCNS_suf_cond_root_est} which implies $|u_k| > e^{d/Q}$. 
\end{example}
\section*{Conclusions}
We established exact dependence of the solution to problem \eqref{SchrodEqt}, \eqref{eq:linear_nc} on the parameters of nonlocal condition, derived the well-posedness criteria, and proved the theorems regarding existence of the problem's mild (strong) solution.  
The conditions on existence of the solution to the given nonlocal problem obtained here, generalize other available results \cite{ashyralyev2008nonlocal,NonlocalAbsNonLinNtouyas1997} beyond the case of $\alpha_k$ bounded by \eqref{estLiang2002}. 

To illustrate the applicability of the obtained results we use them to study the existence of solutions to the general two- and three- point nonlocal problems. 
For each model problem we were able to describe analytically the entire manifold of admissible parameters of the corresponding nonlocal condition.

Aside from the case when $H$ is a classical Schr\"odinger operator, our method of analysis covers the situation when $H$ in \eqref{SchrodEqt} is non-Hermitian. 
The later situation occurs in the modeling of open-quantum systems, where the anti-Hermitian part of Hamiltonian describes the interaction of quantum system with the environment (see \cite{Zloshchastiev2014} and the references therein).

The technique used to prove the main result of this paper relies on the linear nature of the problem and the existence of exact representation for the solution operator via the Dunford-Cauchy formula.
The results of the paper, therefore, can be generalized to other linear nonlocal problems for Schr\"odinger equation.
This will be the subject of our future work.
{\small
\bibliographystyle{siam}


\def\bibpath{.}

\bibliography{%
\bibpath/sytnyk,%
\bibpath/nonlocal,%
\bibpath/nonlocal_schrodinger,%
\bibpath/matan,%
\bibpath/root_finding,%
\bibpath/abstract_schrodinger,%
\bibpath/operator_calculus,%
\bibpath/driven_quantum_systems,%
}
}
\end{document}